\newcommand{\Scal}{\mathcal{S}}
\theoremstyle{plain}
\newtheorem{theorem}{Theorem}
\newtheorem{corollary}{Corollary}
\newtheorem{lemma}{Lemma}
\newtheorem{problem}{Problem}
\newtheorem{definition}{Definition}
\newtheorem{prop}{Proposition}
\theoremstyle{definition}
\newtheorem{remark}{Remark}
\let\NAT@parse\undefined
\def\BibTeX{{\rm B\kern-.05em{\sc i\kern-.025em b}\kern-.08em
    T\kern-.1667em\lower.7ex\hbox{E}\kern-.125emX}}
\begin{document}
\title{Minimal Construction of Graphs with Maximum Robustness}
\author{Haejoon Lee, \IEEEmembership{Student Member, IEEE}, and Dimitra Panagou, \IEEEmembership{Senior Member, IEEE}
\thanks{*This work was supported by the Air Force Office of Scientific Research (AFOSR) under Award No. FA9550-23-1-0163.}
\thanks{Haejoon Lee is with the Department of Robotics, University of Michigan, Ann Arbor, MI, USA. Dimitra Panagou is with the Department of Robotics, and with the Department of Aerospace Engineering, University of Michigan, Ann Arbor, MI, USA. (email: haejoonl@umich.edu; dpanagou@umich.edu).}}
\maketitle

\begin{abstract}
The notions of $r$-robustness and $(r,s)$-robustness of a network have been earlier introduced in the literature to achieve resilient consensus in the presence of misbehaving agents. However, while higher robustness levels enable networks to tolerate a higher number of misbehaving agents, they also require dense communication structures, which are not always desirable for systems with limited communication ranges, energy, and resources. Therefore, this paper studies the fundamental structures behind $r$-robustness and $(r,s)$- robustness properties in two ways. (a) We first establish tight necessary conditions on the number of edges that an undirected graph with an arbitrary number of nodes must have to achieve maximum $r$- and $(r,s)$-robustness. (b) We then use these conditions to construct two classes of undirected graphs, referred as to $\gamma$- and $(\gamma,\gamma)$-Minimal Edge Robust Graphs (MERGs), that provably achieve maximum robustness with minimal numbers of edges. We demonstrate the effectiveness of our method via comparison against existing robust graph structures and a set of simulations.
\end{abstract}

\begin{IEEEkeywords}
Network control systems, distributed consensus and control, resilient consensus
\end{IEEEkeywords}

\section{Introduction}
\label{sec:introduction}
\IEEEPARstart{D}{istributed} multi-agent systems have many practical uses, ranging from information gathering \cite{viseras2018}, target tracking \cite{zhang2021}, to collaborative decision making \cite{yanhao2021}. At the core of many distributed algorithms lies consensus, where multiple agents reach agreement on common state values. However, it is known that performance of consensus deteriorates significantly when one or more misbehaving agents share wrong, or even adversarial, information with the system. As a result, resilient consensus in the presence of misbehaving agents has been studied extensively \cite{LeBlanc13,yan2022,zhang2012,aydın2024,zhang2015,usevitch2020FiniteTime,abbas_vector_resilient_2022, ICRA2025}.
In \cite{LeBlanc13,zhang2012,zhang2015}, an algorithm called \textit{Weighted Mean-Subsequence-Reduced} (W-MSR) algorithm was introduced to allow the non-misbehaving agents (often called \textit{normal}) to reach consensus despite the presence of misbehaving agents. In contrast to reputation- and detection-based approaches~\cite{hadjicostis_trustworthy_2025,yuan_resilient_2025,yemini2022} that require high computational overhead, the W-MSR algorithm instead ensures resilience through a low-complexity filtering strategy. 

{\color{black} The notions of $r$- and $(r,s)$-robustness have been introduced in~\cite{LeBlanc13, zhang2012} to characterize the resilience guarantees provided by the W-MSR algorithm. In particular, they quantify how many misbehaving agents the network tolerates while ensuring normal agents in a network to reach consensus within the convex hull of their initial values through the W-MSR algorithm~\cite{LeBlanc13}. $(2F+1)$-robustness provides a sufficient condition for normal agents to achieve consensus in the presence of $F$-local \textcolor{black}{Byzantine} agents. In contrast, $(F+1,F+1)$-robustness serves as a necessary and sufficient condition for tolerating $F$-total \textcolor{black}{malicious agents}~\cite{LeBlanc13}. Due to their distinct resilience guarantees, both $r$- and $(r,s)$-robustness have attracted significant attention and have been recently extended to ensure resilience in distributed estimation~\cite{liewi2021byzantine},  optimization~\cite{sundaram2019distributed_opt,yuan2025resilient}, and learning~\cite{xie2021towards, ye2024resilient}. Their effectiveness has been demonstrated across a broad range of networked systems, including cooperative robotics and smart grids~\cite{CDC2025,li2025distributed,  cavorsi2022, yuan2025resilient}.}

Consequently, maximizing the robustness of a graph is highly desirable. However, achieving higher robustness by definition requires a higher number of edges, often referred to as edge counts. This results in dense and complex graph structures, which pose challenges for certain systems due to practical constraints such as limited communication range, bandwidth, and energy. {\color{black}This fundamental tradeoff between robustness and edge counts motivates us to investigate the following question: \textit{How can we systematically construct communication graphs that achieve maximum robustness with minimal edge counts?}}


Despite its clear practical importance, identifying the fundamental structures of robust graphs is particularly challenging due to the combinatorial nature of the problem. In fact, determining $r$-robustness is a coNP-complete problem \cite{zhang2015}. Although algorithms for computing exact robustness exist \cite{usevitch2020, leblanc2013alg}, they scale poorly with the number of nodes. This computational bottleneck makes brute-force search approaches impractical. Although there have been works to speed up computations with sampling-based
\cite{sample_based_r_computation}, heuristics \cite{heuristic_robustness_computation}, or machine learning-based approaches \cite{learning_robustness_computation}, they either provide approximations without guarantees or require specific assumptions about graph topologies. In addition, other works under-bound a given graph's robustness with other graph properties that are generally simpler and faster to compute~\cite{LeBlanc13,cavorsi2022,saulnier2017,CDC2025, wang_resilient_2022}. However, these bounds are in general too loose for many graph structures.

To circumvent the combinatorial issue, some researches have focused on the construction and analysis of specific graph classes with predetermined $r$- and $(r,s)$-robustness properties. A complete graph with $n$ nodes is $\lceil\frac n 2\rceil$-robust, indicating that $\lceil\frac n 2\rceil$ is the maximum level of $r$-robustness any graph of $n$ nodes achieves~\cite{LeBlanc13}. Authors of \cite{Usevitch17} analyzed robustness of a class of circulant digraphs. A preferential-attachment method is investigated in \cite{zhang2012,LeBlanc13, zhang2015} for scalable robust graph constructions. Furthermore, robustness levels of various graph structures, such as geometric and random graphs \cite{Guerrero20, Guerrero19,zhang2015,zhao2017,Shahrivar2017, sood2025balancing} have been explored.

Still, understanding the fundamental structural properties of general $r$- and $(r,s)$-robust graphs remains an open problem. While the authors in \cite{saldana2016} have established the lower bounds on edge counts and construction mechanisms for undirected $(2,2)$-robust graphs, these results do not generalize to other robustness levels. The authors in \cite{Guerrero17} study the construction of a class of undirected $r$-robust graphs with the minimum number of nodes. While this is equivalent to constructing graphs with maximum $r$-robustness for a given number of nodes, they do not provide any guarantees on the minimality of the edge set, and their construction method is valid only for specific values of $n$.
The concept of minimal edge sets for $(r,s)$-robust graphs has been introduced in \cite{wehbe21}, though specific scalable construction mechanisms are not discussed.

Instead of investigating the fundamental structures of robust graphs, several studies have sought to relax the connectivity requirements associated with robustness. Some have explored time-varying network topologies to reduce connectivity requirements \cite{saldana2017,wen2023,yu2022}. In \cite{abbas2018}, authors introduced trusted agents, who are always deemed fault-free, to achieve robustness with fewer edges. Furthermore, robustness under multi-hop communication schemes has been investigated in \cite{yuan2021}. More recently, authors in~\cite{gao2025activesecureneighborselection} have proposed relaxing the connectivity requirements by employing detection-based isolation strategies. While these reduce the connectivity requirements associated with $r$- and $(r,s)$-robustness, they only do so by lowering the required robustness level and thus still depend on robust graph structures. This further underscores the need to investigate the fundamental structural properties of robustness.

\subsubsection*{Contributions} 
{\color{black} Our work directly investigates the fundamental structures of $r$- and $(r,s)$-robust graphs through two primary ways. First, we derive tight lower bounds on edge counts required for graphs with any number of nodes to achieve maximum robustness. To the best of our knowledge, these bounds are the first to explicitly depend on the level of robustness. Then, for an arbitrary number of nodes, we present two classes of graphs that provably achieve maximum robustness with the minimum number of edges. We refer to these graphs as $\gamma$- and $(\gamma,\gamma)$-Minimal Edge Robust Graphs (MERGs)\footnote{In~\cite{CDC2024}, we referred to graphs that achieve maximum $r$-robustness with minimal sets of edges as $(2r-1,r)$- and $(2r,r)$-robust graphs. To avoid confusion with the standard definition of $(r,s)$-robustness, we adopt a new naming convention in this paper.}}. We build upon our previous work \cite{CDC2024}, in which we only studied $r$-robust graphs with minimal sets of edges, by also considering $(r,s)$-robust graphs. Specifically, we make the following contributions:
\begin{enumerate}
    \item We establish tight lower bounds on the number of edges that an undirected graph with any number of nodes must have to achieve maximum $r$- or $(r,s)$-robustness.
    \item Based on the lower bounds, we construct two classes of undirected graphs with an arbitrary number of nodes, referred to as $\gamma$- and $(\gamma,\gamma)$-MERGs that are proven to achieve maximum $r$- and $(r,s)$-robustness with minimal edge sets, respectively. \textcolor{black}{Moreover, we prove that the construction mechanisms for $(\gamma,\gamma)$-MERGs are both necessary and sufficient, thereby capturing the entire space of graphs that achieve maximum $(r,s)$-robustness with the fewest edges.}
    \item We evaluate our methods through comparisons with other classes of robust graphs and a series of simulations.
\end{enumerate}

The remainder of this paper is organized as follows. Section~\ref{sec:prelim} presents preliminaries and the problem statement. Section~\ref{sec:lower_bounds1} establishes necessary conditions on edge counts for graphs to achieve maximum $r$-robustness, whereas Section~\ref{sec:construction1} provides systematic constructions of $\gamma$-MERGs. Section~\ref{sec:lower_bounds2} determines the necessary conditions on edge counts for graphs to achieve maximum $(r,s)$-robustness, and Section~\ref{sec:construction2} presents constructions of $(\gamma,\gamma)$-MERGs. Section~\ref{sec:comparison} compares the proposed constructions with other existing graph constructions. Section~\ref{sec:sim} presents simulation results, and Section~\ref{sec:conc} concludes the paper and discusses directions for future research.

\section{Background and Problem Statement}
\label{sec:prelim}
\subsection{Notation}
We denote a simple, undirected time-invariant graph as $\mathcal G = (\mathcal {V},\mathcal{E})$ where $\mathcal {V}$ and $\mathcal{E}$ are the vertex set and the undirected edge set of graph respectively. We denote the number of nodes $n$. An undirected edge $(i,j)\in \mathcal{E}$ indicates that information is exchanged between nodes $i$ and $j$. The neighbor set of agent $i\in \mathcal V$ is denoted as $\mathcal N_i =\{j\in \mathcal V \;|\; (i,j) \in \mathcal{E}\}$. The state of agent $i$ at time $t$ is denoted as $x_i[t]$. We denote the cardinality of a set $\mathcal S$ as $|\mathcal S|$. We denote the set of non-negative and positive integers as $\mathbb Z_{\geq 0}$ and $\mathbb Z_{>0}$. \textcolor{black}{Lastly, we denote by $\lceil \cdot \rceil$ and $\lfloor \cdot \rfloor$ the ceiling and floor operators, respectively.}

\subsection{Weighted Mean Subsequence Reduced \textcolor{black}{Algorithm}}
In this section, we review the relevant fundamental concepts of $r$- and $(r,s)$-robustness of a graph, as well as the W-MSR algorithm, which are defined in \cite{LeBlanc13}. 

Let $\mathcal G = (\mathcal V, \mathcal E)$ be a graph. Then, an agent $i \in \mathcal V$ shares its state $x_i[t]$ at time $t$ with all neighbors $j \in \mathcal N_i$, and each agent $i$ updates its state according to the nominal update rule:
\begin{equation}
\label{eq:linear}
    x_i[t+1]=w_{ii}[t]x_i[t] + \sum_{j\in \mathcal N_i} w_{ij}[t]x_j[t],
\end{equation}
where $w_{ij}[t]$ is the weight assigned to agent $j$'s value by agent $i$, and where the following conditions are assumed for $w_{ij}[t]$ for $\forall i \in \mathcal V$ and $t \in \mathbb Z_{\geq0}$:
\begin{itemize}
    \item $w_{ij}[t]=0$ $\forall j\notin \mathcal N_i\cup \{i\}$,
    \item $w_{ij}[t]\geq \alpha$, $0\leq \alpha <1$ $\forall j\in \mathcal N_i\cup \{i\}$,
    \item $\sum_{j=1}^n w_{ij}[t]=1$
\end{itemize}
Through the protocol given by \eqref{eq:linear}, agents are guaranteed to reach consensus on their states as long as the graph is connected~\cite{wei2007, sundaram2008}.
However, consensus may not be reached in the presence of misbehaving agents~\cite{LeBlanc13}. In our paper, {\color{black}we assume that the identities of misbehaving agents are unknown, and we} focus on two types of misbehaving agents:
\begin{definition}[\textbf{Malicious agent}]
    \label{def:malicious}
    An agent $i\in \mathcal V$ is called \textbf{malicious} if it does not follow the nominal update protocol \eqref{eq:linear} at some time step $t$.
\end{definition}
\begin{definition}[\textbf{Byzantine agent}]
    \label{def:Byzantine}
    An agent $i\in \mathcal V$ is called \textbf {Byzantine} if it does not follow the nominal update protocol \eqref{eq:linear}, or if it
does not send the same value to all of its neighbors at some time step $t$.
\end{definition}
\textcolor{black}{A malicious agent may deviate from the nominal update law but still broadcasts the \textit{same} value to all of its neighbors. In contrast, a Byzantine agent may both deviate from the protocol and send different values to different neighbors. Hence, a malicious agent is as a special, weaker case of a Byzantine agent.} Agents that are not misbehaving are called normal agents.
There are numerous models to describe the number of misbehaving agents in a network \cite{LeBlanc13}. We focus on two such models below:
\begin{definition}[$\mathbf F$\textbf{-total}]
    \label{def:ftotal}
    A set $\mathcal S \subset \mathcal V$ is \textbf{$\mathbf F$-total} if it contains at most $F$ nodes in the graph (i.e. $|\mathcal S|\leq F)$.
\end{definition}

\begin{definition}[$\mathbf F$\textbf{-local}]
    \label{def:flocal}
    A set $\mathcal S \subset \mathcal V$ is \textbf{$\mathbf F$-local} if all other nodes have at most $F$ nodes of $\mathcal S$ as their neighbors (i.e. $|\mathcal N_i\cap \mathcal S|\leq F$, $\forall i \in \mathcal V \setminus \mathcal S$).
\end{definition}

In response to various threat and scope models, algorithms on resilient consensus \cite{LeBlanc13, saldana2017,dibaji2017,yan2022,hadjicostis_trustworthy_2025,yuan_resilient_2025} have become very popular. In particular, the W-MSR (Weighted-Mean Subsequent Reduced) algorithm \cite{LeBlanc13} with the parameter $F$ guarantees normal agents to achieve asymptotic consensus with $F$-total or $F$-local misbehaving agents under certain assumed topological properties of the communication graph. 
\begin{definition}[$\mathbf r$\textbf{-reachable}]
    \label{def:r_reachability}
    Let $\mathcal G = (\mathcal V,\mathcal{E})$ be a graph and $\mathcal S$ be a nonempty subset of $\mathcal V$. The subset $\mathcal S$ is $\mathbf r$\textbf{-reachable} if $\exists i\in \mathcal S$ such that $|\mathcal N_i \backslash \mathcal S|\geq r$, where $r\in \mathbb Z_{>0}$.
\end{definition}

\begin{definition}[$\mathbf r$\textbf{-robust}]
    A graph $\mathcal G = (\mathcal V,\mathcal{E})$ is $\mathbf r$\textbf{-robust} if $\forall \mathcal S_1,\mathcal S_2 \subset \mathcal V$ where $\mathcal S_1\cap S_2 = \emptyset$ and $\mathcal S_1,\mathcal S_2\neq \emptyset$, at least one of them is $r$-reachable.
    \label{def:r_robust}
\end{definition}

\begin{definition}[$\mathbf {(r,s)}$\textbf{-reachable}]
    \label{def:rs_reachability}
  Given a graph $\mathcal G=(\mathcal V, \mathcal E)$ and
a nonempty subset $\mathcal S \subset \mathcal V$, $\mathcal S$ is $\mathbf{(r,s)}$\textbf{-reachable} if there are at least $s \in \mathbb Z_{\geq 0}$ nodes in $\mathcal S$ with at least $r \in \mathbb Z_{> 0}$ neighbors outside of $\mathcal S$; i.e., given $\mathcal X^r_{\mathcal S} = \{i \in \mathcal S : |\mathcal N_i \setminus \mathcal S| \geq r\}$, then $|\mathcal X^r_{\mathcal S}|\geq s$.

\end{definition}

\begin{definition}[$\mathbf {(r,s)}$\textbf{-robust}]
    A graph $\mathcal G = (\mathcal V,\mathcal{E})$ is $\mathbf {(r,s)}$\textbf{-robust} if for every
pair of nonempty, disjoint subsets $\mathcal S_1,\mathcal S_2 \subset \mathcal V$, at least one
of the following holds (let $\mathcal X^r_{\mathcal S_k} = \{i \in \mathcal S_k : |\mathcal N_i \setminus \mathcal S_k| \geq r\}$ for $k\in\{1,2\}$):
\begin{itemize}
    \item $|\mathcal X^r_{\mathcal S_1}|= |\mathcal S_1|$;
    \item $|\mathcal X^r_{\mathcal S_2}|= |\mathcal S_2|$;
    \item $|\mathcal X^r_{\mathcal S_1}|$ + $|\mathcal X^r_{\mathcal S_2}|\geq s$.
\end{itemize}
    \label{def:rs_robust}
\end{definition}

\textcolor{black}{A set $\Scal$ is $r$-reachable if it contains at least one node with at least $r$ neighbors outside the set, and $(r,s)$-reachable if it contains at least $s$ such nodes. These notions help us define $r$- and $(r,s)$-robustness. A graph is $r$-robust if, for \textit{every} pair of disjoint subsets, at least one subset is $r$-reachable. In contrast, a graph is $(r,s)$-robust if, for \textit{every} pair of disjoint subsets, at least $s$ nodes have at least $r$ neighbors outside  their respective subsets. Hence, $(r,s)$-robustness generalizes $r$-robustness: any $(r,s)$-robust graph with $s\geq1$ is also $r$-robust.} Conversely, if a graph is $r$-robust, it is at least $(r,1)$-robust.

{\color{black} While $(2F+1)$-robustness provides a sufficient condition for consensus under $F$-local Byzantine agents, $(F+1,F+1)$-robustness is both necessary and sufficient condition to tolerate $F$-total malicious agents~\cite{LeBlanc13}. Note that these notions represent stronger forms of network resilience compared to more traditional graph-theoretic measures such as connectivity or minimum degree~\cite{LeBlanc13, pirani2023}. In particular, $r$- and $(r,s)$-robust networks are also at least $r$-connected and has minimum degree of $r$~\cite{LeBlanc13, cavorsi2022}. Thus, these notions of robustness enable a network to achieve the same level of resilience (i.e., tolerance to the same number of misbehaving agents) using fewer edges (cf.~\cite[Fig. 3]{pirani2023}).

}

\subsection{Problem Statement}
In this paper, we aim to construct graphs with a given number of nodes that achieve maximum $r$- and $(r,s)$- robustness using the minimal number of edges. Consider a graph $\mathcal G=(\mathcal V, \mathcal E)$ where $|\mathcal V|=n$. Before formally defining our problems, we first define two classes of graphs below:

\begin{definition}
    [$\mathbf r$\textbf{-MERG}]
    \label{def:rmerg}
    A graph $\mathcal G$ is $\mathbf r$\textbf{-Minimal Edge Robust Graph}, or $\mathbf r$\textbf{-MERG} if $\mathcal G$ achieves $r$-robustness with the fewest edges possible.  
\end{definition}

\begin{definition}
    [$\mathbf {(r,s)}$\textbf{-MERG}]
    \label{def:rsmerg}
    A graph $\mathcal G$ is $\mathbf {(r,s)}$\textbf{-Minimal Edge Robust Graph}, or $\mathbf {(r,s)}$\textbf{-MERG} if $\mathcal G$ achieves $(r,s)$-robustness with the fewest edges possible.
\end{definition}

The $r$-MERGs and $(r,s)$-MERGs are general classes of graphs that achieve certain robustness levels with minimal sets of edges. Note that in~\cite{CDC2024} we have provided definitions of the same properties using different terms, which however may cause ambiguity and appear conflicting to the well-established notion of $(r,s)$-robust graphs. Hence in this paper we introduce the MERG definitions above to avoid any confusion. Since the maximum $r$-robustness $\mathcal G$ can achieve is $r=\lceil\frac {n}{2}\rceil $~\cite{LeBlanc13}, the problems addressed in this paper are:

\begin{problem}
    Given a desired number of nodes $n\in \mathbb Z_{>0}$,
    construct $\lceil \frac n 2\rceil$-MERGs.
    \label{prob:first}
\end{problem}
\begin{problem}
 Given a desired number of nodes $n\in \mathbb Z_{>0}$, construct $(\lceil \frac n 2\rceil,\lceil \frac n 2\rceil)$-MERGs.
    \label{prob:second}
\end{problem}

\begin{remark}
While the actual maximum $(r,s)$-robustness $\mathcal G$ achieves is $(\lceil \frac n 2 \rceil, n)$, this is only achieved by complete graphs~\cite[Lemma 4]{LeBlanc13}, which are rather trivial. Instead, we focus on the case where $r=s=\lceil \frac n 2 \rceil$, as $(\lceil \frac n 2 \rceil,\lceil \frac n 2 \rceil)$-robust network tolerates up to $\lceil \frac n 2 \rceil-1$ malicious agents, which is the maximum number of malicious agents the network with $n$ nodes tolerates through the W-MSR algorithm in practice.
\end{remark}

For ease of notation, we define $\gamma = \lceil \frac{n}{2} \rceil$ for the remainder of the paper. \textcolor{black}{Systematically constructing $\gamma$- and $(\gamma,\gamma)$-MERGs allows us to design networks that achieve the highest possible resilience for many distributed algorithms to misbehaving agents with the minimal number of communication links. Such constructions allow the system across diverse application domains to enjoy the highest robustness with reduced communication and energy overhead, which is critical for resource-constrained networks such as sensor networks, robotic swarms, and other distributed control systems. Moreover, the proposed constructions provide graph topologies that also work for extensions of the W-MSR algorithm which address practical network constraints, such as time-varying topologies and limited-rate communication~\cite{wang2020resilient, saldana2017}, as well as multi-dimensional states~\cite{shang2020resilient}, without requiring additional edges.
}

\textcolor{black}{Before proceeding with our analyses, we note that for both $\gamma$- and $(\gamma,\gamma)$-MERGs, we separate the analyses into two cases based on whether $n$ is even or odd. This separation is made for analytical convenience in our approach.} For both cases, our analysis takes place in two steps: First, we establish the lower bounds on the edge counts for a graph with $n$ nodes to achieve maximum robustness ($\gamma$- and $(\gamma,\gamma)$-robustness). Second, we leverage these lower bounds to develop construction mechanisms that are proven to construct $\gamma$- and $(\gamma,\gamma)$-MERGs.


\section{Bounds on Edge Counts for $\gamma$-MERGs}
\label{sec:lower_bounds1}
This section addresses the first step toward constructing $\gamma$-MERGs by examining the minimum number of edges required for a graph $\mathcal G=(\mathcal V,\mathcal E)$ with $|\mathcal V|=n$ to achieve $\gamma=\lceil \frac n 2 \rceil$-robustness. This section consider two cases. In the first case, we find the lower bounds on edge counts for $\gamma$-robust graphs with odd $n$. Then, in the second case, we do the same for $\gamma$-robust graphs with even $n$. These are formally established in~\Cref{lem:min_addition} and~\Cref{lem:min_addition2}, respectively. These lemmas will be later used to construct $\gamma$-MERGs in Section~\ref{sec:construction1}.

Before analyzing the lower bounds on the edge counts for $\gamma$-robust graphs, we first introduce the concept of clique:
\begin{definition}[clique \cite{clique}]
    A \textbf{clique} $\mathcal C$ of a graph $\mathcal G = (\mathcal V, \mathcal{E})$ is a subset $\mathcal{C} \subseteq \mathcal{V}$ such that $\forall i, j \in \mathcal{C}$ with $i \ne j$, we have $(i, j) \in \mathcal{E}$.
\end{definition}
In general, $k$-clique refers to a clique consisting of $k$ nodes. The clique with the largest cardinality in $\mathcal G$ is called the maximum clique. We will use cliques to discover necessary graph structures within $\gamma$-robust graphs, which will help us establish the lower bounds on edge counts.

\subsection{Case 1: Graphs with Odd $n$}
We now examine the lower bounds on the number of edges for the $\gamma$-robust graphs with odd $n$. To this end, we first study the smallest maximum clique that a $\gamma$-robust graph with odd $n$ must contain in order to satisfy the conditions of $r$-robustness (Definition~\ref{def:r_robust}).

\begin{figure*}
    \centering
\includegraphics[width=1\linewidth]{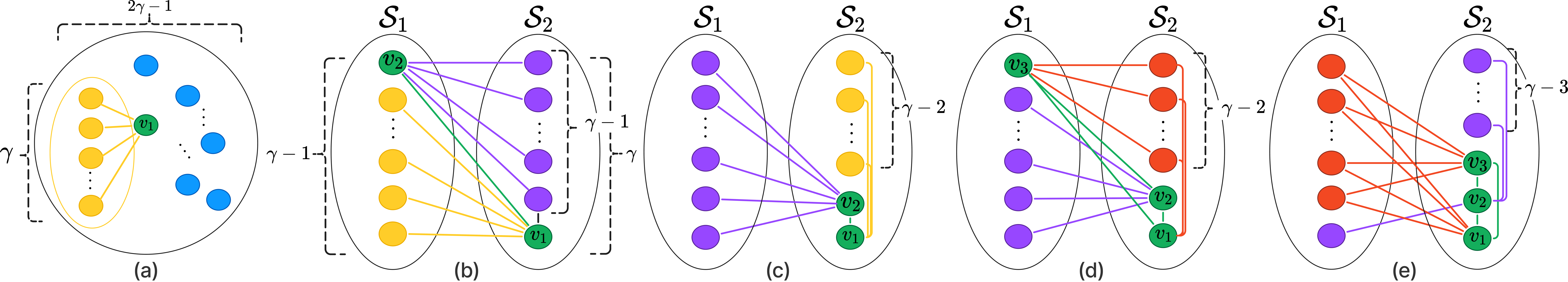}
    \caption{These figures illustrate the first two iterations in the proof of~\Cref{lem:complete}. Figure~(a) shows the initial configuration, where node $v_1\in \mathcal V$ has $\gamma$ neighbors (colored in yellow). Figure~(b) shows the first iteration where node $v_2 \in \mathcal S_1$ is connected to $\gamma-1$ purple nodes and to $v_1$ in $\mathcal S_2$. The green nodes $v_1$ and $v_2$ form a $2$-clique. Note we maintain $|\mathcal S_1|=\gamma-1$ and $|\mathcal S_2|=\gamma$ from figures (b) to (e). Figure~(c) shows $\mathcal S_1$ and $\mathcal S_2$ after the completion of the first step, where $\gamma-1$ nodes in $\mathcal S_1$ (including $v_2$) have been swapped with $\gamma-1$ purple nodes in $\mathcal S_2\setminus\{v_1\}$. Then, the start of the second step is shown in  Figure~(d), where another node $v_3 \in \mathcal S_1$ has edges with $\gamma-2$ red nodes and with $v_1,v_2$ in $\mathcal S_2$. Here, the green nodes $v_1,v_2,$ and $v_3$ form a $3$-clique. Figure~(e) shows $\mathcal S_1$ and $\mathcal S_2$ after $\gamma-2$ nodes in $\mathcal S_1$ (including $v_3$) have been swapped with $\gamma-2$ red nodes in $\mathcal S_2\setminus\{v_1,v_2\}$. This process continues until $(\gamma+1)$-clique is formed.}
    \label{fig:complete_graph_proof}
\end{figure*}

\begin{lemma}
\label{lem:complete}
    Let $\mathcal G = (\mathcal V, \mathcal{E})$ be a graph of $n$ nodes where $n$ is odd, and let $\gamma=\lceil \frac n 2 \rceil$. Then, if $\mathcal G$ is $\gamma$-robust, it must contain a $(\gamma+1)$-clique.
\end{lemma}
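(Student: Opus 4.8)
The plan is to prove, by induction on $k$, the slightly stronger statement that $\mathcal G$ contains a $k$-clique for every $k$ with $1\le k\le \gamma+1$; the case $k=\gamma+1$ is then the lemma. Throughout we use $n=2\gamma-1$ (this is exactly where the oddness of $n$ is used) and assume $n\ge 3$, the case $n=1$ being degenerate. The base case $k=1$ is immediate, since any single vertex is a $1$-clique. The growing mechanism is the one illustrated in Fig.~\ref{fig:complete_graph_proof}: given a $k$-clique sitting inside a carefully chosen subset, $\gamma$-robustness forces a vertex outside that subset to be adjacent to everything in it, which enlarges the clique; we then reshuffle and repeat.

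For the inductive step, suppose $\mathcal G$ contains a $k$-clique $\mathcal C=\{v_1,\dots,v_k\}$ with $1\le k\le\gamma$. Since $n-k=2\gamma-1-k\ge \gamma-k\ge 0$, we may pick any $\gamma-k$ vertices outside $\mathcal C$ and set $\mathcal S_2=\mathcal C\cup\{\text{those }\gamma-k\text{ vertices}\}$, so $|\mathcal S_2|=\gamma$, and let $\mathcal S_1=\mathcal V\setminus\mathcal S_2$, so $|\mathcal S_1|=n-\gamma=\gamma-1$. Both sets are nonempty (here $\gamma\ge 2$, i.e.\ $n\ge 3$, is used to guarantee $\mathcal S_1\ne\emptyset$) and disjoint, so $\gamma$-robustness applies to the pair $(\mathcal S_1,\mathcal S_2)$. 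Now $\mathcal S_2$ cannot be $\gamma$-reachable, because every vertex of $\mathcal S_2$ has at most $|\mathcal S_1|=\gamma-1<\gamma$ neighbors outside $\mathcal S_2$. Hence $\mathcal S_1$ is $\gamma$-reachable, so there is a vertex $v_{k+1}\in\mathcal S_1$ with $|\mathcal N_{v_{k+1}}\setminus\mathcal S_1|\ge\gamma$. But $\mathcal N_{v_{k+1}}\setminus\mathcal S_1\subseteq\mathcal S_2$ and $|\mathcal S_2|=\gamma$, so in fact $\mathcal S_2\subseteq\mathcal N_{v_{k+1}}$; in particular $v_{k+1}$ is adjacent to each of $v_1,\dots,v_k$. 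Since $v_{k+1}\in\mathcal S_1$ is distinct from the vertices of $\mathcal C\subseteq\mathcal S_2$, the set $\mathcal C\cup\{v_{k+1}\}$ is a $(k+1)$-clique. This is precisely the ``swap'' step of Fig.~\ref{fig:complete_graph_proof}: $v_{k+1}$ is moved into $\mathcal S_2$ (so the enlarged clique still lies in $\mathcal S_2$) and the filler vertices are refreshed so that $|\mathcal S_2|=\gamma$ and $|\mathcal S_1|=\gamma-1$ are restored for the next round.

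Iterating the step from $k=1$ up to $k=\gamma$ produces cliques of sizes $2,3,\dots,\gamma+1$, and the last one is the required $(\gamma+1)$-clique. The only real content — and the point I expect needs the most care — is the cardinality bookkeeping that makes each round fire: because $n=2\gamma-1$ is odd, a $\gamma$-subset and a $(\gamma-1)$-subset exactly partition $\mathcal V$, and this single fact does double duty, namely (i) it leaves too few vertices in $\mathcal S_1$ for $\mathcal S_2$ ever to be $\gamma$-reachable, and (ii) it sharpens ``$v_{k+1}$ has $\gamma$ neighbors outside $\mathcal S_1$'' into ``$v_{k+1}$ is adjacent to \emph{all} of $\mathcal S_2$,'' which is what captures the whole current clique. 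One also has to notice that the filler vertices of $\mathcal S_2$ may be re-chosen freely at each step, so the running clique can always be kept inside $\mathcal S_2$. Notably, no lower bound on vertex degrees is needed for this lemma; the minimum-degree considerations enter only in the later edge-count bounds.
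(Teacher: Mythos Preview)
Your proof is correct and follows essentially the same approach as the paper: both iteratively grow a clique by placing the current $k$-clique inside a set $\mathcal S_2$ of size $\gamma$, observing that the complementary set $\mathcal S_1$ of size $\gamma-1$ must be the $\gamma$-reachable one, and deducing that the reaching vertex is adjacent to all of $\mathcal S_2$ and hence to the whole clique. Your presentation is slightly cleaner---an explicit induction with freely re-chosen filler vertices---whereas the paper phrases the same process as a sequence of node swaps, but the underlying mechanism is identical.
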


\begin{proof}
    In this proof, we will show our argument by constructing two subsets $\mathcal S_1,\mathcal S_2\subset \mathcal V$ such that $\mathcal G$ contains a $(\gamma+1)$-clique. Since $\mathcal G$ is $\gamma$-robust, it holds that for any pair of disjoint, non-empty subsets $\mathcal S_1$, $\mathcal S_2$ of $\mathcal V$, at least one of them is $\gamma$-reachable. Hence, by Definition~\ref{def:r_robust} and~\textcolor{black}{\cite[Lemma~5]{LeBlanc13}}, for any node in $\mathcal V=\{v_1,\dots, v_n\}$, $|\mathcal N_{v_i}|\geq \gamma$ must hold. WLOG, let $v_1\in \mathcal V$ be a node, as shown on Figure~\ref{fig:complete_graph_proof} (a). Then because $n=2\gamma-1$, we can construct $\mathcal S_1$ to be the set of $\gamma-1$ neighbors of $v_1$ and $\mathcal S_2$ to be the set of the remaining $\gamma$ nodes including $v_1$ such that $|\mathcal S_1|=\gamma-1$, $|\mathcal S_2|=\gamma$, and $\mathcal S_1\cap \mathcal S_2=\emptyset$. This enforces $\mathcal S_1$ to be $\gamma$-reachable.

    Since $\mathcal S_1$ is $\gamma$-reachable, $\exists v_2 \in \mathcal S_1$ such that $|\mathcal N_{v_2} \cap\mathcal S_2|=|\mathcal N_{v_2} \setminus \mathcal S_1|\geq \gamma$. Then in the first step, $v_2$ has an edge with $v_1 \in \mathcal S_2$, forming a $2$-clique as shown on Figure~\ref{fig:complete_graph_proof} (b). We then swap $\gamma-1$ nodes in $\mathcal S_1$ including $v_2$ with $\gamma-1$ nodes in $\mathcal S_2\setminus\{v_1\}$ (colored in purple in Figure~\ref{fig:complete_graph_proof}). Then in the second step, since $\mathcal S_1$ is $\gamma$-reachable even after $v_2$ is swapped out, $\exists v_3 \in \mathcal S_1$ such that $|\mathcal N_{v_3} \cap\mathcal S_2|\geq \gamma$. By construction, since $v_1$ and $v_2$ are among the $\gamma$ nodes in $\mathcal S_2$, $v_3$ forms a $3$-clique with $v_1$ and $v_2$, as shown on Figure~\ref{fig:complete_graph_proof} (d). We then swap $\gamma-2$ nodes in $\mathcal S_1$ including $v_3$ with $\gamma-2$ nodes in $\mathcal S_2\setminus \{v_1,v_2\}$ (colored in red in Figure~\ref{fig:complete_graph_proof}). Again, in the third step, since $\mathcal S_1$ is $\gamma$-reachable even after $v_3$ is swapped out, $\exists v_4 \in \mathcal S_1$ that has edges with all $\gamma$ nodes in $\mathcal S_2$. Since $v_1,v_2,v_3 \in \mathcal S_2$ are three of the $\gamma$ nodes to have edges with $v_4$, the nodes $v_1,v_2,v_3,v_4$ form a $4$-clique. We then swap $\gamma-3$ nodes in $\mathcal S_1$ including $v_4$ with $\gamma-3$ nodes in $\mathcal S_2\setminus \{v_1,v_2,v_3\}$, and thus the construction continues.
    
    Likewise, in the $k^{\text{th}}$ step, $\exists v_{k+1} \in \mathcal S_1$ such that $|\mathcal N_{v_{k+1}} \cap\mathcal S_2|\geq \gamma$ and thus forms a $(k+1)$-clique with $v_1,v_2,\cdots, v_{k-1},v_k \in \mathcal S_2$. We then swap $\gamma-k$ nodes in $\mathcal S_1$ including $v_{k+1}$ with $\gamma-k$ nodes in $\mathcal S_2\setminus \{v_1,\cdots,v_k\}$. At the end of $\gamma-1^{\text{th}}$ step, $\mathcal S_2$ contains $v_1,\cdots,v_\gamma$ that form a $\gamma$-clique. Finally, since $\mathcal S_1$ is $\gamma$-reachable, it contains a node $v_{\gamma+1}$ that has edges with all nodes $v_1,\cdots, v_\gamma \in \mathcal S_2$. Then, $\{v_{\gamma+1}\}\cup \mathcal S_2$ forms a ($\gamma+1)$-clique.
\end{proof}

\Cref{lem:complete} states that for a graph with odd $n$ to be $\gamma$-robust, it must contain a clique size of $\gamma+1$ or more nodes. Stemming from this reasoning, we finalize the lower bound on edge counts for the case of odd $n$:

\begin{lemma}
 \label{lem:min_addition}
    Let $\mathcal G = (\mathcal V, \mathcal{E})$ be a graph of $n$ nodes where $n$ is odd, and let $\gamma=\lceil \frac n 2\rceil$. Then, if $\mathcal G$ is $\gamma$-robust, 
\begin{equation}
        \label{eq:min_addition}
        |\mathcal{E}|\geq \frac {3\gamma(\gamma-1)} {2}.
    \end{equation}
\end{lemma}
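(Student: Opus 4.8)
The plan is to leverage Lemma~\ref{lem:complete}, which guarantees that a $\gamma$-robust graph with odd $n$ contains a $(\gamma+1)$-clique $\mathcal{C}$ using $\binom{\gamma+1}{2}=\frac{\gamma(\gamma+1)}{2}$ edges internally. The target bound $\frac{3\gamma(\gamma-1)}{2}$ exceeds this by $\frac{3\gamma(\gamma-1)}{2}-\frac{\gamma(\gamma+1)}{2}=\frac{\gamma(2\gamma-4)}{2}=\gamma(\gamma-2)$, so the main task is to account for at least $\gamma(\gamma-2)$ additional edges incident to the $\gamma-2$ vertices lying outside $\mathcal{C}$ (recall $n=2\gamma-1$, so $|\mathcal{V}\setminus\mathcal{C}|=\gamma-2$). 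If I can show each vertex outside $\mathcal{C}$ has degree at least $\gamma$, and that these degrees are ``mostly new'' edges not already counted in the clique, the bound follows.

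First I would recall from the proof of Lemma~\ref{lem:complete} that $\gamma$-robustness forces every vertex to have degree at least $\gamma$ (otherwise isolating a low-degree vertex in $\mathcal{S}_2$ and putting its non-neighbors plus itself appropriately violates $\gamma$-reachability for both sides). Next, for the $\gamma-2$ vertices outside $\mathcal{C}$, I would bound the edges incident to them. A clean way: count edges with at least one endpoint in $\mathcal{W}:=\mathcal{V}\setminus\mathcal{C}$. Each $w\in\mathcal{W}$ contributes degree $\geq\gamma$; summing over $\mathcal{W}$ gives $\sum_{w\in\mathcal{W}}\deg(w)\geq\gamma(\gamma-2)$, and this sum equals (edges within $\mathcal{W}$) counted twice plus (edges between $\mathcal{W}$ and $\mathcal{C}$) counted once. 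Hence the number of edges with an endpoint in $\mathcal{W}$ is at least $\frac{1}{2}\gamma(\gamma-2)$ — but that is only half of what I need, so a more refined argument using robustness directly (not just minimum degree) is required.

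The sharper approach is to apply the $\gamma$-robustness condition to a well-chosen partition: take $\mathcal{S}_1=\mathcal{W}$ (the $\gamma-2$ outside vertices) — but $|\mathcal{W}|=\gamma-2<\gamma$, so $\mathcal{S}_1$ cannot be $\gamma$-reachable unless some vertex in $\mathcal{W}$ has $\geq\gamma$ neighbors in $\mathcal{C}$; more carefully, I would build $\mathcal{S}_1$ by taking $\mathcal{W}$ together with a few clique vertices so that $|\mathcal{S}_1|$ and $|\mathcal{S}_2|$ are balanced, and iterate a swapping argument analogous to Lemma~\ref{lem:complete} to force each $w\in\mathcal{W}$ to send $\gamma$ edges into a $\gamma$-subset of $\mathcal{C}$, i.e. each outside vertex has $\geq\gamma$ neighbors inside the clique. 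That yields $\gamma(\gamma-2)$ edges between $\mathcal{W}$ and $\mathcal{C}$, all distinct from the $\frac{\gamma(\gamma+1)}{2}$ intra-clique edges, giving $|\mathcal{E}|\geq\frac{\gamma(\gamma+1)}{2}+\gamma(\gamma-2)=\frac{\gamma(\gamma+1)+2\gamma(\gamma-2)}{2}=\frac{3\gamma^2-3\gamma}{2}=\frac{3\gamma(\gamma-1)}{2}$, as desired.

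The main obstacle is establishing that every vertex outside the $(\gamma+1)$-clique has at least $\gamma$ neighbors \emph{inside} that clique (not merely degree $\geq\gamma$ overall), since a naive degree count loses a factor of two. This likely requires a careful partition-and-swap argument in the spirit of Lemma~\ref{lem:complete}: fix the clique $\mathcal{C}$, and for a given outside vertex $w$, construct disjoint $\mathcal{S}_1,\mathcal{S}_2$ with $w\in\mathcal{S}_1$, $|\mathcal{S}_1|=\gamma-1$, $|\mathcal{S}_2|=\gamma$, where $\mathcal{S}_2$ consists of clique vertices, and argue that $\mathcal{S}_1$'s $\gamma$-reachability combined with iterated swaps eventually forces $w$ (or a structurally equivalent outside vertex) to connect to all of some $\gamma$-subset of $\mathcal{C}$; one must ensure the accounting does not double-count and that the $\gamma-2$ outside vertices are handled simultaneously or symmetrically. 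I would expect to spend most of the effort verifying that these partitions are well-defined (sizes add to $n=2\gamma-1$) and that the reachability implications genuinely land inside $\mathcal{C}$ rather than leaking to other outside vertices.
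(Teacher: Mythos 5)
You have the right decomposition (clique edges plus $\gamma(\gamma-2)$ further edges) and you correctly diagnose that a minimum-degree count loses a factor of two, but the step you yourself flag as ``the main obstacle'' --- that every vertex outside the $(\gamma+1)$-clique $\mathcal{C}$ has at least $\gamma$ neighbors \emph{inside} $\mathcal{C}$ --- is left unproven, and it is in fact stronger than what $\gamma$-robustness gives you. Applying Definition~\ref{def:r_robust} to $\mathcal{S}_1=\mathcal{W}$, $\mathcal{S}_2=\mathcal{C}$ only produces \emph{one} witness $w\in\mathcal{W}$ with $\gamma$ neighbors in $\mathcal{C}$ (note $\mathcal{S}_2$ cannot be $\gamma$-reachable since its complement has only $\gamma-2$ vertices, so $\mathcal{S}_1$ must be); for the remaining outside vertices, robustness permits their $\gamma$ ``outward'' neighbors to include other outside vertices, so the reachability implications can and do ``leak'' exactly as you fear. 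As stated, your plan does not close this gap.

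The paper's proof resolves it by not insisting that the edges land inside $\mathcal{C}$. It processes the outside vertices one at a time: with $\mathcal{S}_2$ initialized to $\mathcal{C}$ and $\mathcal{S}_1$ to the unprocessed outside vertices, $\gamma$-robustness forces some $v_i\in\mathcal{S}_1$ to have $\gamma$ neighbors in the \emph{current} $\mathcal{S}_2=\mathcal{C}\cup\{v_1,\dots,v_{i-1}\}$; those $\gamma$ edges are all new because $v_i$ has never before been an endpoint of a counted edge; then $v_i$ is moved into $\mathcal{S}_2$ and the argument repeats. Since $|\mathcal{S}_2|\geq\gamma+1$ throughout, $\mathcal{S}_2$ is never $\gamma$-reachable, so $\mathcal{S}_1$ always is, and the iteration runs $\gamma-2$ times, yielding $\gamma(\gamma-2)$ new edges with no double counting and hence $|\mathcal{E}|\geq\frac{\gamma(\gamma+1)}{2}+\gamma(\gamma-2)=\frac{3\gamma(\gamma-1)}{2}$. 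If you replace your ``each outside vertex sends $\gamma$ edges into $\mathcal{C}$'' claim with this sequential bookkeeping against the growing $\mathcal{S}_2$, your argument goes through.
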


\begin{proof}
 Since $\mathcal G$ is $\gamma$-robust, it holds that for any pair of disjoint, non-empty subsets $\mathcal S_1$, $\mathcal S_2\subset\mathcal V$, at least one of them is $\gamma$-reachable (by Definition~\ref{def:r_robust}).
 From~\Cref{lem:complete}, we know that $\mathcal V$ must contain a $(\gamma+1)$-clique $\mathcal C$. This means that $\mathcal G$ must have at least as many edges as to construct a $(\gamma+1)$-clique, i.e., $|\mathcal E|\geq \frac {\gamma(\gamma+1)}{2}$. 

 Let $\mathcal C' =  \mathcal V\setminus \mathcal C$. This then implies $|\mathcal C|= \gamma+1$ and $|\mathcal C'|=\gamma-2$. WLOG, let $\mathcal S_1= \mathcal C'$ and $\mathcal S_2= \mathcal C$. Then, $|\mathcal S_1|=\gamma-2$ and $|\mathcal S_2|=\gamma +1$. This implies $\mathcal S_1$ has to be $\gamma$-reachable. In other words, there must exist $v_1 \in \mathcal S_1$ such that $|\mathcal N_{v_1}\cap \mathcal S_2|=|\mathcal N_{v_1}\setminus\mathcal S_1|\geq \gamma$. Therefore, for such $v_1$ to exist, we need $\gamma$ edges that connect $v_1$ with $\gamma$ nodes in $\mathcal S_2$: $|\mathcal E| \geq \frac {\gamma(\gamma+1)} 2 +\gamma$. Note that $v_1\in \mathcal S_1$ is the node that makes $\mathcal S_1$ $\gamma$-reachable. Therefore, moving $v_1$ from $\mathcal S_1$ to $\mathcal S_2$ requires the new $\mathcal S_1$, which is now without $v_1$, to contain another node $v_2$ that has at least $\gamma$ neighbors in $\mathcal S_2$. This again requires at least $\gamma$ additional edges (i.e. $|\mathcal E| \geq \frac {\gamma(\gamma+1)} 2+2\gamma$). We then move $v_2$ from $\mathcal S_1$ to $\mathcal S_2$, which necessitates $\mathcal S_1$ to have another node $v_3$ to have $\gamma$ neighbors in $\mathcal S_2$. 
  
  We continue this process of (i) connecting a node $v_i \in \mathcal S_1$ with any $\gamma$ nodes in $\mathcal S_2$ and (ii) moving $v_i$ from $\mathcal S_1$ to $\mathcal S_2$ in the $i^{\text{th}}$ step until $\mathcal S_1$ becomes empty. Then, it is possible to repeat these two steps $\gamma-2$ times, as we initially had $|\mathcal S_1|=\gamma-2$. In other words, for each $\gamma-2$ pair of $\mathcal{S}_1$ and $\mathcal{S}_2$, we require $\gamma$ additional edges. Since we started from the setting where $\mathcal S_2= \mathcal C$ and forced $|\mathcal E|$ to increase at least a total of $\gamma(\gamma-2)$, $|\mathcal E| \geq \frac {\gamma(\gamma+1) + 2\gamma(\gamma-2)} 2 \Rightarrow |\mathcal E| \geq \frac {3\gamma(\gamma-1)} 2$. 
\end{proof}

Using~\Cref{lem:min_addition}, we conclude that any $\gamma$-robust graphs with odd $n$, i.e., graphs with an odd number of nodes that achieve maximum $r$-robustness, must have $\frac {3\gamma(\gamma-1)}{2}$ or more edges. In fact,~\Cref{thm:max_r} later shows that the bound in the lemma is tight and cannot be further improved. Furthermore, using~\Cref{lem:min_addition}, we derive a useful corollary:
\begin{corollary}
 \label{cor:lowerbound}
    Let $\mathcal G = (\mathcal V, \mathcal{E})$ be an $r$-robust graph, where $r\in \mathbb Z_{>0}$. Then, \begin{equation}
        \label{eq:bound1}
        |\mathcal{E}|\geq \frac {3r(r-1)} {2}.
    \end{equation}
\end{corollary}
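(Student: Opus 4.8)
The plan is to combine Lemma~\ref{lem:min_addition} with two elementary facts about $r$-robust graphs. First, every node has degree at least $r$: applying Definition~\ref{def:r_robust} with $\mathcal S_1=\{v\}$ and $\mathcal S_2=\mathcal V\setminus\{v\}$, the set $\mathcal S_2$ cannot be $r$-reachable (for $r\geq 2$ no node has $r$ neighbours inside the singleton $\{v\}$), so $\mathcal S_1$ must be $r$-reachable, forcing $|\mathcal N_v|\geq r$ and hence $|\mathcal E|\geq rn/2$. Second, since the largest $r$-robustness attainable on $n$ nodes is $\lceil n/2\rceil$, we must have $n\geq 2r-1$. The case $r=1$ is vacuous, so take $r\geq 2$.

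With these in place, I would split on $n$. If $n=2r-1$, then $r=\lceil n/2\rceil=\gamma$ with $n$ odd, so Lemma~\ref{lem:min_addition} applies verbatim and gives $|\mathcal E|\geq\frac{3r(r-1)}{2}$ (the even case $n=2r$ is analogous once Lemma~\ref{lem:min_addition2} is available). If $n\geq 3(r-1)$, the degree bound alone already closes it: $|\mathcal E|\geq rn/2\geq\frac{3r(r-1)}{2}$, a quantity that is an integer since $r(r-1)$ is even.

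The remaining range $2r-1<n<3(r-1)$ -- nonempty only for $r\geq 5$ -- is where the real work lies, since there neither the degree count nor Lemma~\ref{lem:min_addition} applies. Here I would try to reproduce the relocation argument underlying the proof of Lemma~\ref{lem:min_addition} without the assumption $n=2r-1$: keep a set $\mathcal S_1$ of exactly $r-1$ nodes with $\mathcal S_2=\mathcal V\setminus\mathcal S_1$; since $|\mathcal S_1|=r-1<r$, $\mathcal S_2$ is never $r$-reachable, so at each step $\mathcal S_1$ contains a node $v_k$ with at least $r$ neighbours in the current $\mathcal S_2$, and moving $v_k$ into $\mathcal S_2$ exposes $r$ edges not counted in earlier steps; iterating $r-1$ times yields $|\mathcal E|\geq r(r-1)$. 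The deficit to $\frac{3r(r-1)}{2}$ is then exactly $\binom r2$ further edges, which in the odd case are supplied for free by the $(\gamma+1)$-clique of Lemma~\ref{lem:complete}. I expect this to be the main obstacle: the proof of Lemma~\ref{lem:complete} uses $n=2\gamma-1$ crucially, and for larger $n$ an $r$-robust graph need not contain a $(\gamma+1)$-clique at all, so one must instead force an $r$-clique (or an equally dense core) -- for instance by letting $\mathcal S_1$ consist of the $r-1$ lowest-degree nodes so that its high-degree complement absorbs the extra $\binom r2$ edges, with careful bookkeeping to keep the clique edges disjoint from the relocation edges -- or replace the peel-one-vertex induction (which loses $2r-3$ edges per level and is too lossy) by a joint induction on $n$ and $r$.
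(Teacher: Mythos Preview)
Your case analysis leaves a genuine gap: for $r\geq 5$ and $2r<n<3(r-1)$ neither Lemma~\ref{lem:min_addition}/\ref{lem:min_addition2} nor the degree bound $|\mathcal E|\geq rn/2$ reaches $\tfrac{3r(r-1)}{2}$, and the relocation argument you sketch yields only $r(r-1)$ edges, short by $\binom{r}{2}$. The remedies you float---forcing an $r$-clique by placing the low-degree vertices in $\mathcal S_1$, or a joint induction on $(n,r)$---are not carried out, and you yourself correctly note that Lemma~\ref{lem:complete} genuinely needs $n=2\gamma-1$, so there is no reason to expect a large clique for general $n$. As written, the proposal is incomplete on this range.

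The paper sidesteps the entire case split with a single observation you never invoke: the minimum number of edges required for $r$-robustness is non-decreasing in the number of nodes. Since $2r-1$ is the smallest $n$ admitting an $r$-robust graph (cited from~\cite{Guerrero17}), Lemma~\ref{lem:min_addition} applied at $n=2r-1$ then gives the bound for every $n\geq 2r-1$ at once. This monotonicity principle is exactly the missing idea that would collapse your middle range without any clique-hunting or degree bookkeeping. The paper states it as a general fact without proof, so if you want the argument fully self-contained you would still need to justify it; but that is a different (and cleaner) obligation than the direct structural attack you are attempting.
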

\begin{proof}
In general, the more nodes a graph has, the
more edges it must have to maintain a given robustness. From \cite{Guerrero17}, it is shown that $2r-1$ is the smallest number of nodes a graph must have to be $r$-robust. In other words, $r$-robustness is the maximum robustness a graph with $2r-1$ nodes achieves. Thus,~\Cref{lem:min_addition} is restated as: any $r$-robust graph with $2r - 1$ nodes must have at least $\frac{3r(r-1)}{2}$ edges. Consequently, any graph $\mathcal{G}$ with $2r - 1$ or more nodes must have at least $\frac{3r(r-1)}{2}$ edges to be $r$-robust.
\end{proof}

\begin{remark}
\label{remark:cor1}
    Corollary~\ref{cor:lowerbound} reveals that the bound in~\Cref{lem:min_addition} is in fact a necessary condition for any $r$-robust graphs of any number of nodes $n$, ensuring that graphs with edges less than $\frac {3r(r-1)} 2$ cannot be $r$-robust. One useful implication of this result is that it provides a quick way to upper-bound a graph’s robustness level based solely on its number of edges. Directly computing $r$-robustness of a graph is generally difficult~\cite{zhang2015, leblanc2013alg}, and while various approximation methods exist~\cite{sample_based_r_computation, heuristic_robustness_computation, learning_robustness_computation}, they often lack formal guarantees or assume some structures on the given graphs.  In fact, the upper bound on robustness derived from this result can be used to accelerate existing methods by reducing their search space.
\end{remark}

\subsection{Case 2: Graphs with Even $n$}

Now we continue our analysis on the edge counts for the graphs with even $n$. Since $n$ is even, $\gamma=\frac n 2$. This implies that, for the same value of $\gamma$, the value of $n$ is one greater when $n$ is even compared to when it is odd. However, the overall approach is similar to the previous case: we will first determine the minimum size of the maximum clique that $\mathcal{G}$ must contain to be $\gamma$-robust.

\begin{lemma}
\label{lem:complete2}
    Let $\mathcal G = (\mathcal V, \mathcal{E})$ be a graph with $n$ nodes where $n$ is even, and let $\gamma=\frac n 2$. Then, if $\mathcal G$ is $\gamma$-robust, it must contain a $\big(\lfloor\frac {\gamma+4}{2}\rfloor\big)$-clique.
\end{lemma}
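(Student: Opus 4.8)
The plan is to mirror the strategy of Lemma~\ref{lem:complete} but account for the extra node present when $n$ is even. Since $n=2\gamma$, every node must still have degree at least $\gamma$ for $\mathcal G$ to be $\gamma$-robust (otherwise a single low-degree node isolated into $\mathcal S_1$ with its non-neighbors in $\mathcal S_2$ would violate $\gamma$-reachability of both sets). First I would pick an arbitrary node $v_1$, put $\gamma$ of its neighbors plus $v_1$ into $\mathcal S_2$ (so $|\mathcal S_2|=\gamma+1$) and the remaining $\gamma-1$ nodes into $\mathcal S_1$. Because $\mathcal S_1$ has fewer nodes and every node in it needs $\gamma$ external neighbors but $|\mathcal S_1\cup\{\text{external}\}|$ constraints now differ, I would force $\mathcal S_1$ to be $\gamma$-reachable and extract a node $v_2$ with $\gamma$ neighbors in $\mathcal S_2$; since $v_1\in\mathcal S_2$, this gives a $2$-clique. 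I then swap $v_2$ (and $\gamma-1$ other $\mathcal S_1$ nodes) into $\mathcal S_2$ in exchange for $\gamma-1$ non-distinguished nodes, keeping the sizes fixed, and repeat.

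The key difference from the odd case is bookkeeping: at the $k^{\text{th}}$ step the newly extracted node $v_{k+1}$ is adjacent to all $\gamma$ distinguished-so-far nodes in $\mathcal S_2$, but because $|\mathcal S_2|=\gamma+1$ rather than $\gamma$, I can only guarantee it connects to $\gamma$ of the $\gamma+1$ nodes there — so I cannot immediately conclude it joins the growing clique unless $v_1,\dots,v_k$ are all among those $\gamma$. I would handle this by always arranging the swap so that the $k$ already-found clique nodes $v_1,\dots,v_k$ stay in $\mathcal S_2$ and the "free" slot (the one node $v_{k+1}$ might miss) is one of the swapped-in non-distinguished nodes; this requires $k\le\gamma-1$, i.e. $k+1\le\gamma$, which limits how far the clique can grow directly. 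The clique-growth argument therefore stalls roughly halfway, which is exactly why the bound is $\lfloor\frac{\gamma+4}{2}\rfloor$ rather than $\gamma+1$: I expect to be able to force a clique of size about $\gamma/2+2$, and the floor/parity of $\gamma$ accounts for whether $\gamma$ is even or odd.

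Concretely, I would run the iteration until the pool of non-distinguished nodes in $\mathcal S_2$ is exhausted or too small to absorb the "missed" slot, count how many clique nodes have accumulated at that point, and then do one final appeal to $\gamma$-reachability of $\mathcal S_1$ to attach one last node, yielding the $\big(\lfloor\frac{\gamma+4}{2}\rfloor\big)$-clique. I would verify the arithmetic separately for $\gamma$ even and $\gamma$ odd to confirm the floor expression, and include a figure analogous to Figure~\ref{fig:complete_graph_proof} showing the first couple of swap steps and where the extra $\mathcal S_2$ node sits.

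The main obstacle I anticipate is controlling the swaps precisely enough: I must show that at every step there genuinely is a valid way to move $\gamma-k$ nodes (including $v_{k+1}$) out of $\mathcal S_1$ and bring in replacements while (i) keeping $v_1,\dots,v_k$ in $\mathcal S_2$, (ii) keeping $|\mathcal S_1|$ and $|\mathcal S_2|$ fixed, and (iii) ensuring $\mathcal S_1$ remains non-empty and still $\gamma$-reachable for the next round. Getting the inequality that says "there are still enough fresh nodes to swap" to line up exactly with the claimed clique size $\lfloor\frac{\gamma+4}{2}\rfloor$ — and not off by one in either parity — is the delicate part; everything else is a direct adaptation of the odd-case construction.
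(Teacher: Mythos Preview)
There is a genuine gap in the central step. You write that you will ``arrange the swap so that the $k$ already-found clique nodes $v_1,\dots,v_k$ stay in $\mathcal S_2$ and the `free' slot (the one node $v_{k+1}$ might miss) is one of the swapped-in non-distinguished nodes.'' But you do not get to choose which node of $\mathcal S_2$ the extracted vertex $v_{k+1}$ fails to neighbor: the edge set of $\mathcal G$ is fixed, and you only control the partition $(\mathcal S_1,\mathcal S_2)$. If $v_{k+1}$ happens to be non-adjacent to $v_1$, say, then no rearrangement of $\mathcal S_2$ that keeps $v_1$ inside will make $v_{k+1}$ hit all of $v_1,\dots,v_k$. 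Your own arithmetic reveals the problem: if the missed node could always be forced to be non-distinguished, the clique would grow to size $\gamma$ (your constraint $k+1\le\gamma$), not $\lfloor\frac{\gamma+4}{2}\rfloor$, and you never explain what actually causes the halving.

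The paper's argument confronts this head-on by tracking a \emph{family} of $k$-cliques rather than a single growing one. Starting from a $k$-clique $\mathcal C_{k,1}\subset\mathcal S_2$, each time the $\gamma$-reachable node $u_i\in\mathcal S_1$ misses a vertex $v_i$ of the current clique intersection, it forms a \emph{new} $k$-clique $\mathcal C_{k,i+1}$ with the remaining vertices together with the previously added $u_j$'s. The intersection $\bigcap_j\mathcal C_{k,j}$ shrinks by one vertex per step, so after at most $k$ bad steps it is empty and the next $u$ is forced to complete some $\mathcal C_{k,j}$ to a $(k+1)$-clique. The real size constraint is that in the worst case all of $v_1,\dots,v_k,u_1,\dots,u_k$ must simultaneously sit in $\mathcal S_2$, and since $|\mathcal S_2|=\gamma+1$ this gives $2k\le\gamma+1$, i.e.\ a $\lfloor\frac{\gamma+3}{2}\rfloor$-clique. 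A separate second phase (using two disjoint cliques of size roughly $\gamma/2$ and a final $|\mathcal S_1|=|\mathcal S_2|=\gamma$ split) squeezes out the extra vertex needed to reach $\lfloor\frac{\gamma+4}{2}\rfloor$ when $\gamma$ is even. Your outline is missing both the multiple-clique bookkeeping and this second phase.
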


Due to length, the proof for~\Cref{lem:complete2} is given in the Appendix (Section~\ref{sec:appendix}). \textcolor{black}{We emphasize that the bounds from Lemmas~\ref{lem:complete} and~\ref{lem:complete2} are only analytical lower bounds we are able to show, and may not reflect the actual necessary maximum clique size.} 
In fact, unlike the analysis for odd $n$, studying $\gamma$-robust graphs with even $n$ involves topological structures beyond the maximum clique we observe in~\Cref{lem:complete2}. The following lemma presents the additional necessary structures.

\begin{lemma}
\label{lem:S}
       Let $\mathcal G = (\mathcal V, \mathcal{E})$ be a graph of $n$ nodes where $n$ is even, and let $\gamma=\frac n 2$. Then, if $\mathcal G$ is $\gamma$-robust, $\mathcal G$ contains an induced subgraph $\mathcal G_S=(\mathcal V_S, \mathcal{E}_S)$ such that $|\mathcal V_S|=\gamma+1$ and $|\mathcal{E}_S|\geq \lfloor{\frac {\gamma^2+2} 2}\rfloor$.
\end{lemma}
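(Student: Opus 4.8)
The plan is to mimic the iterative swap-and-count argument behind Lemma~\ref{lem:min_addition}, but now keeping $|\mathcal S_2|$ fixed at $\gamma+1$ throughout so that every edge we accumulate lives inside a single vertex set of size $\gamma+1$, which will be $\mathcal V_S$. The engine is the following observation: for any disjoint partition $\mathcal V=\mathcal S_1\cup\mathcal S_2$ with $|\mathcal S_1|=\gamma-1$ and $|\mathcal S_2|=\gamma+1$, the set $\mathcal S_2$ cannot be $\gamma$-reachable, since no vertex can have $\gamma$ neighbors inside the $(\gamma-1)$-element set $\mathcal S_1$; hence $\gamma$-robustness (Definition~\ref{def:r_robust}) forces $\mathcal S_1$ to be $\gamma$-reachable, so some $v\in\mathcal S_1$ has at least $\gamma$ neighbors in $\mathcal S_2$, i.e., misses at most one vertex of $\mathcal S_2$.

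First I would invoke Lemma~\ref{lem:complete2} to fix a clique $\mathcal C$ with $m:=|\mathcal C|=\lfloor\tfrac{\gamma+4}{2}\rfloor\le\gamma+1$, and initialize $\mathcal S_2$ as $\mathcal C$ together with $\gamma+1-m$ further vertices, with $\mathcal S_1$ the remaining $\gamma-1$ vertices. Then I would run $t:=\gamma+1-m$ rounds. In round $j$ pick $v_j\in\mathcal S_1$ with at least $\gamma$ neighbors in $\mathcal S_2$; since $\mathcal C\cup\{v_1,\dots,v_{j-1}\}\subseteq\mathcal S_2$ has size $m+j-1$ and $v_j$ misses at most one vertex of $\mathcal S_2$, it is adjacent to at least $m+j-2$ vertices of $\mathcal C\cup\{v_1,\dots,v_{j-1}\}$. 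To preserve the invariants $|\mathcal S_1|=\gamma-1$, $|\mathcal S_2|=\gamma+1$, and $\mathcal C\cup\{v_1,\dots,v_j\}\subseteq\mathcal S_2$, I would move $v_j$ into $\mathcal S_2$ and move one vertex of $\mathcal S_2\setminus(\mathcal C\cup\{v_1,\dots,v_j\})$ back into $\mathcal S_1$; a size count shows this leftover set is nonempty in every round (with exactly one element left in the last one). After round $t$ we get $\mathcal S_2=\mathcal C\cup\{v_1,\dots,v_t\}$, which has exactly $\gamma+1$ vertices, and we take $\mathcal G_S$ to be the subgraph induced on $\mathcal V_S:=\mathcal S_2$.

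For the edge count, the $\binom{m}{2}$ edges of $\mathcal C$ together with, for each $j$, the $\ge m+j-2$ edges from $v_j$ into $\mathcal C\cup\{v_1,\dots,v_{j-1}\}$ are pairwise distinct, since each new batch is incident to a vertex $v_j$ absent from the set counted so far; hence
\[
|\mathcal E_S|\ \ge\ \binom{m}{2}+\sum_{j=1}^{t}(m+j-2).
\]
It then remains to substitute $t=\gamma+1-m$, simplify the right-hand side to $\tfrac{(\gamma-2)(\gamma+1)}{2}+m$, and plug in $m=\lfloor\tfrac{\gamma+4}{2}\rfloor$, splitting on the parity of $\gamma$: the even and odd cases yield $\tfrac{\gamma^2+2}{2}$ and $\tfrac{\gamma^2+1}{2}$ respectively, both equal to $\lfloor\tfrac{\gamma^2+2}{2}\rfloor$.

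The hardest part is essentially outsourced to Lemma~\ref{lem:complete2}: the clique of size $m$ provides the $\binom{m}{2}$ ``head start'' that a from-scratch accumulation — which would guarantee only one non-neighbor's worth of missing edges per round, far too many — cannot supply. Within the present argument the points requiring care are (i) arguing that $\mathcal S_1$, not $\mathcal S_2$, must be $\gamma$-reachable in every round; (ii) checking the swap always has a legal vertex to move out, so the invariants survive all $t$ rounds; and (iii) the parity bookkeeping in the final arithmetic. The degenerate case $\gamma=1$ (i.e.\ $n=2$) should be verified directly.
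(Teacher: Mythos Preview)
Your proposal is correct and follows essentially the same approach as the paper: start from the $\lfloor\tfrac{\gamma+4}{2}\rfloor$-clique supplied by Lemma~\ref{lem:complete2}, keep $|\mathcal S_1|=\gamma-1$ and $|\mathcal S_2|=\gamma+1$ so that $\mathcal S_1$ is forced to be $\gamma$-reachable, and in each round swap the $\gamma$-reaching vertex into $\mathcal S_2$ while ejecting a non-clique vertex, accumulating at least $m+j-2$ new internal edges per round. Your write-up is in fact slightly tidier than the paper's in two places---you explicitly justify why $\mathcal S_2$ cannot be $\gamma$-reachable, and you close the arithmetic via the compact form $\binom{m}{2}+\sum_{j=1}^{t}(m+j-2)=\tfrac{(\gamma-2)(\gamma+1)}{2}+m$---but the underlying argument is the same.
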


\begin{proof} 
Since $\mathcal G$ is $\gamma$-robust, for any pair of disjoint, non-empty subsets $\mathcal S_1$, $\mathcal S_2 \subset \mathcal V$, at least one of them is $\gamma$-reachable. From~\Cref{lem:complete2}, $\mathcal G$ contains a $\big(\lfloor\frac {\gamma+4}{2}\rfloor\big)$-clique $\mathcal C$.
    WLOG, let $\mathcal S_2$ contain $\gamma+1$ nodes including $\mathcal C$, and let $\mathcal S_1$ contain the remaining $\gamma-1$ nodes. Initially, we set $\mathcal V_S=\mathcal C$. Then, $|\mathcal{E}_S|= \big(\frac 1 2 \big) \big(\lfloor\frac {\gamma+4}{2}\rfloor\big) \big(\lfloor\frac {\gamma+2}{2}\rfloor\big)$. Note there are total $n=2\gamma$ nodes. Since $|\mathcal S_1|=\gamma-1$ and $|\mathcal S_2|=\gamma+1$, $\mathcal S_1$ must be $\gamma$-reachable. 

    By~\Cref{lem:complete2}, we have a $2$-clique and $3$-clique when $\gamma=1$ and $\gamma=2$ respectively, satisfying the statement in the lemma for $\gamma=1,2$. 
    
    Now, we consider $\gamma\geq3$. With this setup, since $\mathcal S_1$ is $\gamma$-reachable, $\exists v_1 \in \mathcal S_1$ such that $|\mathcal N_{v_1}\cap \mathcal S_2|=|\mathcal N_{v_1}\setminus \mathcal S_1|\geq \gamma$. We swap $v_1 \in \mathcal S_1$ with any node in $\mathcal S_2\setminus \mathcal V_S$. Note that $|\mathcal V_S|=\lfloor\frac {\gamma+4}{2}\rfloor$ at this point. Now, we add $v_1$ as a vertex of $\mathcal G_S$ (i.e. $\mathcal V_S$ now contains $v_1$). Then, $|\mathcal E_S|$ increases at least by $\lfloor\frac {\gamma+2}{2}\rfloor$, since $v_1$ may have no edge with one of the nodes in $\mathcal V_S$. Since $\mathcal S_1$ is $\gamma$-reachable even without $v_1$, $\exists v_2 \in \mathcal S_1$ such that $|\mathcal N_{v_2}\cap \mathcal S_2|\geq \gamma$.  Note that $|\mathcal V_S|=\lfloor\frac {\gamma+6}{2}\rfloor$ at this point. We swap $v_2 \in \mathcal S_1$ with a node in $\mathcal S_2\setminus \mathcal V_S$ and add $v_2$ as a new vertex of $\mathcal G_S$ (i.e. $\mathcal V_S$ now contains $v_2$). This increases $|\mathcal E_S|$ at least by $\lfloor\frac {\gamma+4}{2}\rfloor$, as $v_2$ may have no edge with one of the nodes in $\mathcal V_S$. Then, since $\mathcal S_1$ is still $\gamma$-reachable, $\exists v_3 \in \mathcal S_1$ such that $|\mathcal N_{v_3}\cap \mathcal S_2|\geq \gamma$. Note that $|\mathcal V_S|=\lfloor\frac {\gamma+8}{2}\rfloor$. Again, we swap $v_3 \in \mathcal S_1$ with a node in $\mathcal S_2\setminus \mathcal V_S$ and add $v_3$ as a new vertex of $\mathcal G_S$. Then, $|\mathcal E_S|$ increases at least by $\lfloor\frac {\gamma+6}{2}\rfloor$. 
   
   This process continues until when $v_p \in \mathcal S_1$, $p=\gamma+1-\lfloor\frac {\gamma+4}{2}\rfloor$, that makes $\mathcal S_1$ $\gamma$-reachable and gets swapped with a node in $\mathcal S_2\setminus \mathcal V_S$ where $\mathcal V_S=\mathcal C\cup\{v_1,v_2,\cdots, v_{p-1}\}$. Here, adding $v_p$ as a new vertex of $\mathcal G_S$ increases $|\mathcal E_S|$ at least by $\gamma-1$. At this point, $\mathcal S_2=\mathcal V_S$ and $|\mathcal V_S|=\gamma+1$. Then, $|\mathcal E_S| \geq \big(\frac 1 2\big)\big(\lfloor\frac {\gamma+4}{2}\rfloor\big) \big(\lfloor\frac {\gamma+2}{2}\rfloor\big) + (\lfloor\frac {\gamma+2}{2}\rfloor) + (\lfloor\frac {\gamma+4}{2}\rfloor) + (\lfloor\frac {\gamma+6}{2}\rfloor) +\cdots + (\gamma-2) +(\gamma-1)$.
    
    We remove the floor functions: if $\gamma$ is even, $|\mathcal E_S| \geq \big(\frac 1 2\big)\big(\frac {\gamma+4}{2}\big)\big(\frac {\gamma+2} 2\big) +  \big(\frac {\gamma+2} 2\big)+ \big(\frac {\gamma+4}{2}\big) + \cdots + (\gamma-2) +(\gamma-1)= \frac {\gamma^2+2} 2$. If $\gamma$ is odd, $|\mathcal E_S| \geq \big(\frac 1 2\big)\big(\frac {\gamma+3}{2}\big)\big(\frac {\gamma+1} 2\big) +  \big(\frac {\gamma+1} 2\big)+ \big(\frac {\gamma+3}{2}\big) + \cdots + (\gamma-2) +(\gamma-1) = \frac {\gamma^2+1} 2$. Thus, we have $|\mathcal V_S|=\gamma+1$ and $|\mathcal E_S| \geq \lfloor{\frac{\gamma^2+2} 2}\rfloor$.
\end{proof}
\Cref{lem:S} effectively provides a necessary induced subgraph structure that $\mathcal G$ with even $n$ must have to be $\gamma$-robust. Now, using this lemma, we present the following:

\begin{lemma}
        \label{lem:min_addition2}
    Let $\mathcal G = (\mathcal V, \mathcal{E})$ be a graph of $n$ nodes where $n$ is even, and let $\gamma=\frac n 2$. Then, if $\mathcal G$ is $\gamma$-robust, 
    \begin{equation}
        \label{eq:min_addition2}
        |\mathcal{E}|\geq
            \left\lfloor{\frac {\gamma(3\gamma-2)+2} 2 }\right\rfloor.
    \end{equation}
\end{lemma}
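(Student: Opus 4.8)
The plan is to mirror the argument of Lemma~\ref{lem:min_addition}, replacing the $(\gamma+1)$-clique with the induced subgraph supplied by Lemma~\ref{lem:S}. Since $\mathcal G$ is $\gamma$-robust, Lemma~\ref{lem:S} yields an induced subgraph $\mathcal G_S=(\mathcal V_S,\mathcal E_S)$ with $|\mathcal V_S|=\gamma+1$ and $|\mathcal E_S|\geq\lfloor\frac{\gamma^2+2}{2}\rfloor$. Because $n=2\gamma$, I would take $\mathcal S_2=\mathcal V_S$ and $\mathcal S_1=\mathcal V\setminus\mathcal V_S$, which partitions $\mathcal V$ with $|\mathcal S_2|=\gamma+1$ and $|\mathcal S_1|=\gamma-1$. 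As $|\mathcal S_1|=\gamma-1<\gamma$, no node of $\mathcal S_2$ can have $\gamma$ neighbors outside $\mathcal S_2$, so $\mathcal S_2$ is not $\gamma$-reachable; $\gamma$-robustness then forces $\mathcal S_1$ to be $\gamma$-reachable, i.e. there is $v_1\in\mathcal S_1$ with $|\mathcal N_{v_1}\cap\mathcal S_2|\geq\gamma$, contributing $\gamma$ edges not present in $\mathcal E_S$.

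Next I would iterate a move-and-count step. After recording the $\gamma$ edges at $v_i$, move $v_i$ from $\mathcal S_1$ to $\mathcal S_2$; the updated $\mathcal S_1$ still has fewer than $\gamma$ nodes, so $\mathcal S_2$ remains non-$\gamma$-reachable and $\mathcal S_1$ must again be $\gamma$-reachable, yielding $v_{i+1}$ with at least $\gamma$ neighbors in the updated $\mathcal S_2=\mathcal V_S\cup\{v_1,\dots,v_i\}$. Since $|\mathcal S_1|$ starts at $\gamma-1$ and decreases by one each step, this runs for exactly $\gamma-1$ steps and produces distinct nodes $v_1,\dots,v_{\gamma-1}$. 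The key bookkeeping — and the only delicate point — is that the $\gamma$ edges counted at step $i$ are genuinely new: each such edge is incident to $v_i$, hence not in $\mathcal E_S$ (whose edges lie within $\mathcal V_S$, and $v_i\notin\mathcal V_S$), and not among the edges counted at any earlier step $j<i$, since those are incident to $v_j\neq v_i$ and have their other endpoint in $\mathcal V_S\cup\{v_1,\dots,v_{j-1}\}$, a set that does not contain $v_i$. Therefore $|\mathcal E|\geq|\mathcal E_S|+\gamma(\gamma-1)\geq\lfloor\frac{\gamma^2+2}{2}\rfloor+\gamma(\gamma-1)$.

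Finally I would check the arithmetic identity $\lfloor\frac{\gamma^2+2}{2}\rfloor+\gamma(\gamma-1)=\lfloor\frac{\gamma(3\gamma-2)+2}{2}\rfloor$ by a short parity split: for even $\gamma$ both sides equal $\frac{3\gamma^2-2\gamma+2}{2}$, and for odd $\gamma$ both sides equal $\frac{3\gamma^2-2\gamma+1}{2}$. Everything here besides the non-double-counting argument is a direct transcription of the odd-$n$ proof together with this elementary computation, so I expect the write-up to hinge on stating carefully why the edge set accumulated through step $i-1$ is disjoint from the $\gamma$ edges attached to $v_i$.
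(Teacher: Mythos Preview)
Your proposal is correct and follows essentially the same route as the paper's proof: invoke Lemma~\ref{lem:S} to obtain the induced subgraph $\mathcal G_S$, set $\mathcal S_2=\mathcal V_S$, $\mathcal S_1=\mathcal V\setminus\mathcal V_S$, and iteratively move the $\gamma$-reaching node from $\mathcal S_1$ to $\mathcal S_2$ a total of $\gamma-1$ times to accumulate $\gamma(\gamma-1)$ additional edges on top of $|\mathcal E_S|$. Your write-up is in fact slightly more careful than the paper's, since you explicitly justify both why $\mathcal S_1$ (rather than $\mathcal S_2$) must be the $\gamma$-reachable set and why the edges counted at each step are disjoint from those counted previously.
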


\begin{proof}
The approach of the proof is identical to that of~\Cref{lem:min_addition}, except for the fact that in this proof, we are starting with an induced subgraph $\mathcal G_S=(\mathcal V_S, \mathcal E_S)$ such that $|\mathcal V_S|=\gamma+1$ and $|\mathcal{E}_S|\geq \lfloor{\frac {r^2+2} 2}\rfloor$ (which comes from~\Cref{lem:S}).

  Let $\mathcal S_2=\mathcal V_S$ and $\mathcal S_1=\mathcal V\setminus \mathcal V_S$, which means $|\mathcal S_1|=\gamma-1$ and $|\mathcal S_2|=\gamma+1$. This enforces $\mathcal S_1$ to be $\gamma$-reachable. Then, $\exists v_1 \in \mathcal S_1$ such that $|\mathcal N_{v_1}\cap \mathcal S_2|=|\mathcal N_{v_1}\setminus \mathcal S_1|\geq \gamma$. This requires at least $\gamma$ edges (i.e. $|\mathcal E| \geq |\mathcal E_S|+\gamma$). We move $v_1$ from $\mathcal S_1$ to $\mathcal S_2$. Then since $\mathcal S_1$ is $\gamma$-reachable even without $v_1$, $\exists v_2 \in \mathcal S_1$ such that $|\mathcal N_{v_2}\cap \mathcal S_2|\geq \gamma$. This requires at least $\gamma$ new edges (i.e. $|\mathcal E| \geq |\mathcal E_S|+2\gamma$). 
    
    We continue this process of (i) drawing edges between $v_i \in \mathcal S_1$ and $\gamma$ nodes in $\mathcal S_2$ and (ii) moving $v_i \in \mathcal S_1$ into $\mathcal S_2$ until $\mathcal S_1$ becomes empty. Then, we can do these steps $\gamma-1$ times, as we initially have $|\mathcal S_1|=\gamma-1$. In other words, for each $\gamma-1$ pair of $\mathcal{S}_1$ and $\mathcal{S}_2$, we require $\gamma$ additional edges. Since we have started from the setting where $\mathcal S_2= \mathcal V_S$, we have $|\mathcal E| \geq |\mathcal E_S| + \gamma(\gamma-1)$. Therefore, if $\gamma$ is even, $|\mathcal E| \geq \frac {\gamma^2+2} 2 +\gamma(\gamma-1)=\frac {\gamma(3\gamma-2)+2} 2$. If $\gamma$ is odd, and $|\mathcal E| \geq \frac {\gamma^2+1} 2 +\gamma(\gamma-1) = \frac {\gamma(3\gamma-2)+1} 2$.
\end{proof}

Notice that, between the bounds established in Lemmas~\ref{lem:min_addition} and~\ref{lem:min_addition2}, the latter requires more edges. This is expected, as for a fixed value of $\gamma$,~\Cref{lem:min_addition2} considers graphs with one additional node, which naturally necessitates extra edges to maintain the same level of robustness.

\section{Construction of $\gamma$-MERGs}
\label{sec:construction1}
In the previous section, we have established the lower bounds on edge counts that graphs of $n$ nodes must satisfy in order to achieve $\gamma$-robustness (where $\gamma=\lceil \frac n 2 \rceil$ is the maximum robustness). Proceeding with the lower bounds, we construct $\gamma$-MERGs, or graphs that provably achieve $\gamma$-robustness with a minimal set of edges. This is formally given in~\Cref{thm:max_r}, as our first main result:





\begin{theorem}
\label{thm:max_r}
 Let $\mathcal G = (\mathcal V, \mathcal{E})$ be a graph with $n$ nodes, and let $\gamma=\lceil \frac n 2 \rceil$.  Then, $\mathcal G$ is a $\gamma$-MERG under the following construction mechanisms:
 \begin{itemize}
     \item If $n$ is odd, let $\mathcal X\subset \mathcal V$ be a set of $\gamma+1$ nodes that form a $(\gamma+1)$-clique. Connect each of the remaining $\gamma-1$ nodes to any $\gamma$ nodes in $\mathcal X$.
     \item If $n$ is even, let $\mathcal X\subset \mathcal V$ be a set of $\gamma$ nodes, each of which is adjacent to all other nodes in $\mathcal V$. Select $\lceil \frac {\gamma-2} 2 \rceil$ disjoint pairs of nodes in $\mathcal X$ and remove the edge between each pair. 
 \end{itemize} 
\end{theorem}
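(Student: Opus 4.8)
The plan is to verify two things for each parity of $n$: (i) the constructed graph is indeed $\gamma$-robust, and (ii) its edge count exactly matches the lower bound from Lemma~\ref{lem:min_addition} (odd case) or Lemma~\ref{lem:min_addition2} (even case), so that no $\gamma$-robust graph on $n$ nodes can have fewer edges; together these give the MERG property.

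\emph{Edge count.} For odd $n$, the $(\gamma+1)$-clique on $\mathcal X$ contributes $\binom{\gamma+1}{2}=\frac{\gamma(\gamma+1)}{2}$ edges, and each of the $\gamma-1$ remaining nodes contributes exactly $\gamma$ edges to $\mathcal X$, giving $\frac{\gamma(\gamma+1)}{2}+\gamma(\gamma-1)=\frac{3\gamma(\gamma-1)}{2}$, which meets the bound in \eqref{eq:min_addition}. For even $n$, the nodes in $\mathcal X$ (there are $\gamma$ of them) are each adjacent to all other $n-1=2\gamma-1$ nodes before deletion, i.e. $\mathcal X$ together with its incident edges realizes $\binom{\gamma}{2}+\gamma\cdot\gamma$ edges; removing $\lceil\frac{\gamma-2}{2}\rceil$ disjoint edges inside $\mathcal X$ leaves $\frac{\gamma(\gamma-1)}{2}+\gamma^2-\lceil\frac{\gamma-2}{2}\rceil$, and a short computation (splitting on the parity of $\gamma$) shows this equals $\lfloor\frac{\gamma(3\gamma-2)+2}{2}\rfloor$, matching \eqref{eq:min_addition2}.

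\emph{Robustness, odd case.} I would argue directly from Definition~\ref{def:r_robust}: take any disjoint nonempty $\mathcal S_1,\mathcal S_2$ and show one of them is $\gamma$-reachable. The key observation is that every node of $\mathcal G$ has degree at least $\gamma$ (clique nodes have degree $\geq\gamma$; the extra nodes have degree exactly $\gamma$), and more strongly, every node has at least $\gamma$ neighbors inside the clique $\mathcal X$. If some $\mathcal S_k$ contains a node $v$ with $|\mathcal N_v\setminus\mathcal S_k|\geq\gamma$ we are done, so suppose not: then for each $v\in\mathcal S_1\cup\mathcal S_2$, at least $\deg(v)-\gamma+1$ of its neighbors lie in its own set. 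Since $\mathcal X$ is a clique of size $\gamma+1$ and $|\mathcal S_1\cup\mathcal S_2|\le n=2\gamma-1$, the larger of $\mathcal S_1\cap\mathcal X,\mathcal S_2\cap\mathcal X$ is small enough that a node of $\mathcal X$ in one set sees at least $\gamma$ clique-neighbors in the other set plus outside — the counting here pins down the contradiction. A cleaner route: a node $v\in\mathcal X$ has all $\gamma$ other clique nodes as neighbors; if $v\in\mathcal S_k$ and the clique is not entirely inside $\mathcal S_k$ then $v$ already has neighbors outside, and one pushes the bookkeeping to get $\geq\gamma$; if the whole clique lies in $\mathcal S_k$, then the other set consists only of degree-$\gamma$ nodes all of whose neighbors are in $\mathcal X\subseteq\mathcal S_k$, so every such node is $\gamma$-reachable. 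This dichotomy is the heart of the odd-case argument.

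\emph{Robustness, even case, and the main obstacle.} Here the delicate point — and what I expect to be the main obstacle — is that $\mathcal X$ is no longer a clique: we deleted a near-perfect matching on it, so a node $u\in\mathcal X$ now misses exactly one neighbor $u'\in\mathcal X$. One must show that nonetheless, for any disjoint nonempty $\mathcal S_1,\mathcal S_2$, some set is $\gamma$-reachable, where now $n=2\gamma$ so the threshold $\gamma$ is relatively larger. The argument again splits on whether some $\mathcal S_k$ contains all of $\mathcal X$: if it does not, then since $\mathcal X$ has $\gamma$ nodes and each is non-adjacent to at most one other node of the whole graph, a node of $\mathcal X\cap\mathcal S_k$ has at least $\gamma-1$ of the other $n-1$ nodes outside its missing partner, and if at least one of those lies outside $\mathcal S_k$ the reachability count works out; the only failure mode is when $\mathcal S_k$ is essentially an independent-ish block, which the matching structure rules out because the deleted edges are disjoint (so no node loses two potential neighbors inside $\mathcal X$). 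If one $\mathcal S_k$ contains all of $\mathcal X$, the complementary set lies among the $\gamma$ non-$\mathcal X$ nodes, each adjacent to all of $\mathcal X$, hence each has $\gamma$ neighbors in $\mathcal S_k$, making the complementary set $\gamma$-reachable. I would organize the even case as: reduce to these two scenarios, handle the "all of $\mathcal X$ in one side" case immediately, and for the other case do a careful degree/overlap count using disjointness of the removed matching to get the contradiction — that count, together with correctly handling the parity in $\lceil\frac{\gamma-2}{2}\rceil$ when $\gamma$ is odd (one node of $\mathcal X$ is unmatched and retains full degree), is the technical crux.
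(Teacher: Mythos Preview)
Your overall plan---verify $\gamma$-robustness directly from Definition~\ref{def:r_robust} and check that the edge count meets the lower bounds of Lemmas~\ref{lem:min_addition} and~\ref{lem:min_addition2}---is exactly the paper's approach, and your edge-count computations are correct.

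The robustness arguments, however, are misallocated: you flag the even case as the main obstacle when in fact it is the easier of the two, and in the odd case your ``push the bookkeeping'' hides the only genuinely non-trivial step.

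\emph{Odd case.} Assume $|\mathcal S_1|\le|\mathcal S_2|$ (so $|\mathcal S_1|\le\gamma-1$) and set $p:=|\mathcal X\cap\mathcal S_1|$. The case $p=0$ is immediate as you say, and $p=1$ gives $\gamma$ clique-neighbors outside $\mathcal S_1$ directly. But for $p\ge2$, a clique node $i\in\mathcal S_1$ sees only $\gamma+1-p$ clique-neighbors outside $\mathcal S_1$, and an \emph{arbitrary} such $i$ may have no neighbors in $\mathcal X'$ at all (the non-clique nodes each choose which $\gamma$ of the $\gamma+1$ clique nodes to attach to). The paper closes this with a pigeonhole: one first checks $|\mathcal X'\setminus\mathcal S_1|\ge p-1$, and each node of $\mathcal X'\setminus\mathcal S_1$ is adjacent to at least $p-1$ of the $p$ clique nodes in $\mathcal S_1$ (it misses at most one vertex of $\mathcal X$), giving at least $(p-1)^2$ cross-edges; hence some $i\in\mathcal X\cap\mathcal S_1$ receives $\lceil(p-1)^2/p\rceil=p-1$ of them, and $(\gamma+1-p)+(p-1)=\gamma$ neighbors outside $\mathcal S_1$ in total. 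Without this pigeonhole the argument does not close.

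\emph{Even case.} The paper's route is shorter than your sketch, and you already have the key ingredient. Every node of $\mathcal X$ has degree at least $2\gamma-2$, so if the smaller set $\mathcal S_1$ meets $\mathcal X$ and has $|\mathcal S_1|\le\gamma-1$, then any $i\in\mathcal X\cap\mathcal S_1$ already satisfies $|\mathcal N_i\setminus\mathcal S_1|\ge(2\gamma-2)-(\gamma-2)=\gamma$; if $\mathcal X\cap\mathcal S_1=\emptyset$ you handled it. The only remaining configuration is $|\mathcal S_1|=|\mathcal S_2|=\gamma$, and there the unmatched full-degree node of $\mathcal X$---precisely the one you mention in your parity remark---gives $\gamma$-reachability to whichever side contains it. No further analysis of the matching structure is needed; the non-trivial counting lives in the odd case, not the even one.
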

\begin{proof}
To prove that $\mathcal G$ is $\gamma$-MERG, we need to prove two things for both odd and even values of $n$: (i) it is $\gamma$-robust and (ii) its edge count $|\mathcal E|$ equals to the lower bounds specified in~\Cref{lem:min_addition} or~\Cref{lem:min_addition2}, depending on the value of $n$.

\textbf{Odd $\mathbf n$: } We first prove that it is $\gamma$-robust. Let $\mathcal X' = \mathcal V\setminus \mathcal X$. Then, $|\mathcal X|=\gamma+1$ and $|\mathcal X'|=\gamma-2$. 
WLOG, let $\mathcal S_1,\mathcal S_2 \in \mathcal V$ be nonempty subsets such that $\mathcal S_1\cap \mathcal S_2=\emptyset$ and $|\mathcal S_1|\leq|\mathcal S_2|$. There are two cases. 

1) If $\mathcal X \cap \mathcal S_1=\emptyset$, let a node $i \in \mathcal X'\cap \mathcal S_1$. Since $\mathcal X \cap \mathcal S_1=\emptyset$ and $ \mathcal N_i \subset \mathcal X \subseteq \mathcal S_2$ by construction, $|\mathcal N_i\setminus \mathcal S_1|\geq \gamma$, making $\mathcal S_1$ $\gamma$-reachable. 

2) If $\mathcal X \cap \mathcal S_1\neq\emptyset$, denote $|\mathcal X\cap \mathcal S_1|=p\leq \gamma-1$. Then, we have $1\leq |\mathcal X'\cap\mathcal S_1|\leq \gamma-1-p$, $|\mathcal X\setminus\mathcal S_1|=\gamma+1-p$, and $p-1\leq|\mathcal X'\setminus\mathcal S_1|\leq \gamma -2$. For $i\in \mathcal X \cap \mathcal S_1$, $|(\mathcal N_i\setminus\mathcal S_1)\cap \mathcal X|=|\mathcal X\setminus \{i\}|-(p-1)=\gamma-(p-1)$, as $\mathcal X$ forms a $(\gamma+1)$-clique. For $p=1$, $\exists i \in \mathcal X \cap \mathcal S_1$ such that $|\mathcal N_i\setminus \mathcal S_1|\geq |\mathcal X \setminus \{i\}|= \gamma$.
For $p\geq 2$, note that by construction, every node in $\mathcal X'$ has edges with any $\gamma$ nodes in $\mathcal X$, where $|\mathcal X|=\gamma+1$. Speaking differently, every node in $\mathcal X'$ does not have an edge with only one node in $\mathcal X$. Thus, all nodes in $\mathcal X' \setminus \mathcal S_1$ have edges with at least $p-1$ nodes in $\mathcal X \cap \mathcal S_1$, because it cannot have an edge with only one node in $\mathcal X \cap \mathcal S_1$. Then, as $p-1\leq|\mathcal X'\setminus \mathcal S_1|\leq \gamma-2$, there are at least a total of $(p-1)(p-1)$ edges between the nodes in $\mathcal X' \setminus \mathcal S_1$ and $\mathcal X \cap \mathcal S_1$. Since $|\mathcal X\cap \mathcal S_1|=p$ and by Pigeonhole Principle, $\exists i \in \mathcal X \cap \mathcal S_1$ such that $|(\mathcal N_i\setminus \mathcal S_1)\cap \mathcal X'|\geq \big\lceil{\frac {(p-1)(p-1)}{p}}\big\rceil= \big\lceil{\frac {p^2-2p+1}{p}}\big\rceil=p-1$. Then, $\exists i \in \mathcal X \cap \mathcal S_1$ such that 
 $|\mathcal N_i\setminus \mathcal S_1|=|(\mathcal N_i\setminus\mathcal S_1)\cap \mathcal X|+|(\mathcal N_i\setminus\mathcal S_1)\cap \mathcal X'|\geq \gamma-(p-1)+p-1=\gamma$. 
Thus, whether $\mathcal S_1$ contains a node in $\mathcal X$ or not, $\mathcal S_1$ is always $\gamma$-reachable.


Now we show that $\mathcal{E}$ is a minimal set. $\mathcal G$ has $\frac {(\gamma+1)(\gamma)} 2$ edges to form a $(\gamma+1)$-clique. Additionally, we introduce $\gamma(\gamma-2)$ edges for connecting each of the $\gamma-2$ nodes in $\mathcal X'$ with $\gamma$ nodes in $\mathcal X$. Summing them up, we have $\frac {3\gamma(\gamma-1)} 2$ which equals the lower bound given in~\Cref{lem:min_addition}.

\textbf{Even $\mathbf n$: } We first prove that it is $\gamma$-robust. Let $\mathcal X' = \mathcal V\setminus \mathcal X$. Then, $|\mathcal X|=\gamma$ and $|\mathcal X'|=\gamma$. WLOG, let $\mathcal S_1$ and $\mathcal S_2$ be nonempty subsets of $\mathcal V$ such that $\mathcal S_1\cap \mathcal S_2=\emptyset$ and $|\mathcal S_1|\leq|\mathcal S_2|$. That means $1\leq|\mathcal S_1|\leq \gamma$. There are two cases. 

1) If $\mathcal X \cap \mathcal S_1 = \emptyset$, let a node $i \in \mathcal X'\cap \mathcal S_1$. Since $\mathcal X \cap \mathcal S_1=\emptyset$ and $\mathcal N_i=\mathcal X$, $|\mathcal N_i \setminus \mathcal S_1|\geq \gamma$, and thus $\mathcal S_1$ is $\gamma$-reachable. 

2) If $\mathcal X \cap \mathcal S_1 \neq \emptyset$, let a node $i \in \mathcal X \cap \mathcal S_1$. Note that after removing an edge among $\lceil \frac {\gamma-2} 2 \rceil$ pairs of nodes in $\mathcal X$, there exists at least one (two in case of even $\gamma$) node(s) that is connected to all other nodes in $\mathcal G$. In other words, $\exists v_* \in \mathcal V$ s.t. $|N_{v_*}|=2\gamma-1$. Now, when $1\leq |\mathcal S_1|\leq \gamma-1$, $|\mathcal N_i \setminus \mathcal S_1|\geq \gamma$, making $\mathcal S_1$ $\gamma$-reachable. In case $|\mathcal S_1|=|\mathcal S_2|=\gamma$, we know there exists $v_*$ in either $\mathcal S_1$ or $\mathcal S_2$ such that $|\mathcal N_{v_*}\setminus \mathcal S_i|\geq \gamma$, making $\mathcal S_i$ $\gamma$-reachable. Thus, whether $\mathcal S_1$ contains a node in $\mathcal X$ or not, either $\mathcal S_1$ or $\mathcal S_2$ is always $\gamma$-reachable.

Now, we examine its edge set. The $\gamma$ nodes in $\mathcal X$ connecting to every other node adds up to a total of $\frac {\gamma(\gamma-1)+2\gamma(\gamma)} 2 =\frac {3\gamma^2-\gamma} 2$ edges. Then, if we subtract $\lceil \frac {\gamma-2} 2 \rceil$ from it, we get $\frac {\gamma(3\gamma-2)+2} 2$ for even $\gamma$ and $\frac {\gamma(3\gamma-\gamma)+1} 2$ for odd $\gamma$, which equals the lower bound given in~\Cref{lem:min_addition2}. 
\end{proof}

\begin{figure}
    \centering
\includegraphics[width=1\linewidth]{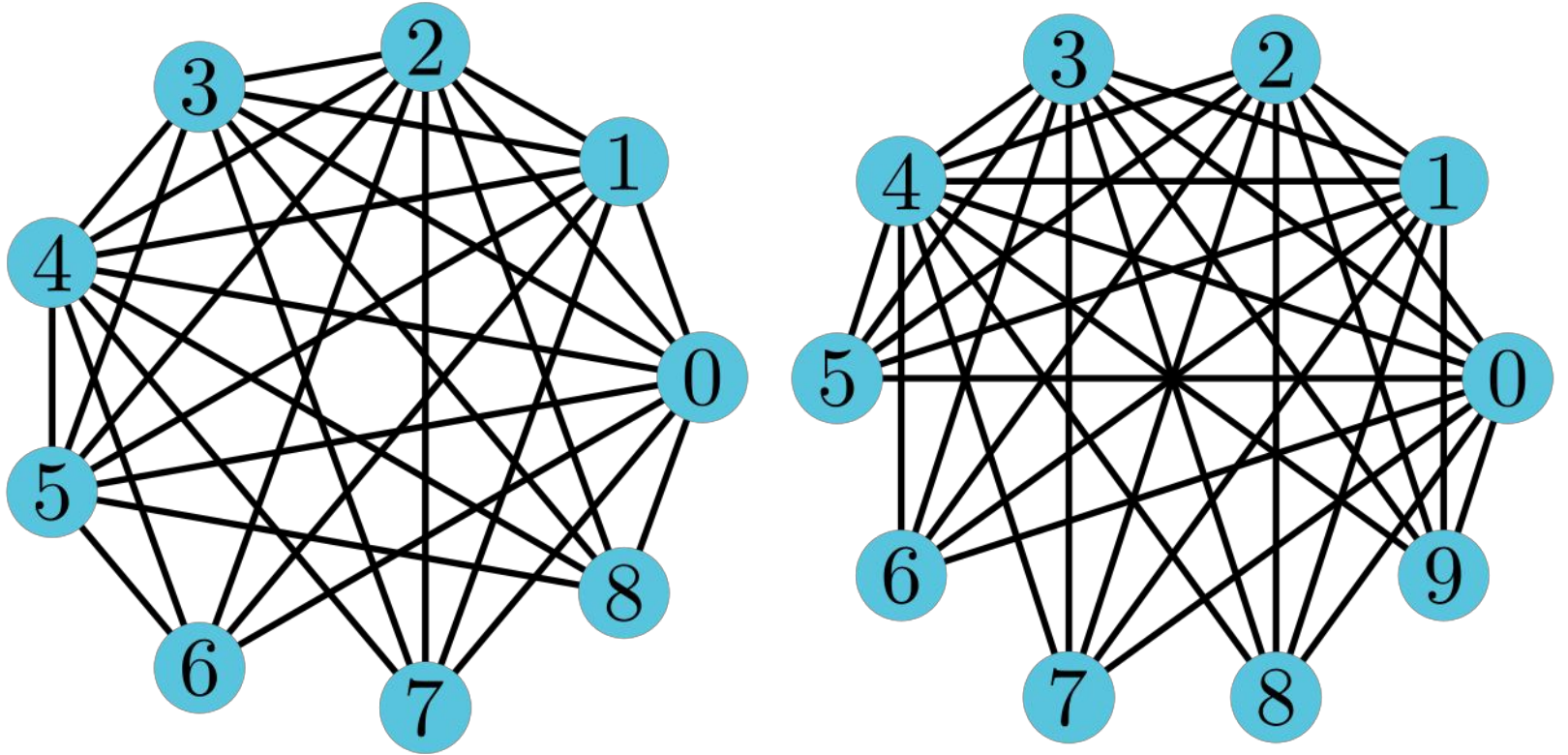}
    \caption{Visualizations of $5$-MERGs with $9$ (left) and $10$ (right) nodes.}
    \label{fig:example1}
\end{figure}

\Cref{thm:max_r} provides a systematic procedure for constructing $\gamma$-MERGs, i.e., graphs that achieve maximum $r$-robustness with a minimal set of edges, thereby solving Problem~\ref{prob:first}. Examples of $\gamma$-MERGs with $n=9$ (left) and $n=10$ (right) are visualized in Figure~\ref{fig:example1}.
The theorem also suggests that the bounds given in~\Cref{lem:min_addition} and \ref{lem:min_addition2} are tight. In fact, removing any edge from a $\gamma$-MERG immediately drops its robustness by at least one, as demonstrated later in Section~\ref{subsub:second_sim}.

\begin{remark}
\textcolor{black}{
\label{remark:similiarity}
    We observe that the construction of $F$-elemental graphs in~\cite[Prop.~1]{Guerrero17} bears similarities to our approach in~\Cref{thm:max_r} for odd $n$. Both methods aim to achieve maximum $r$-robustness and include a $\gamma$-clique; the key difference lies in how the remaining non-clique nodes are connected. We also observe that the total number of edges required to achieve maximum robustness for odd $n$ is the same for both approaches (see~\Cref{tab:edge_comparison_single}), suggesting the existence of a more general construction framework for odd $n$. While generalizing these constructions is left for future work, we note that our method additionally handles even $n$ and provides a formal proof of edge-count minimality, which is not addressed in~\cite{Guerrero17}.}
\end{remark}

\section{Bounds on Edge Counts for $(\gamma,\gamma)$-MERGs}
\label{sec:lower_bounds2}

In Sections~\ref{sec:lower_bounds1} and~\ref{sec:construction1}, we have examined and constructed $\gamma$-MERGs to address Problem~\ref{prob:first}. In the following two sections, we focus on $(\gamma,\gamma)$-MERGs, graphs with a given number of nodes that achieve maximum $(r,s)$-robustness with a minimal set of edges, thereby addressing Problem~\ref{prob:second}.  

Similar to what we have done in the previous sections, we consider two cases: odd $n$ and even $n$. However, note that $(\gamma,\gamma)$-MERGs for odd $n$ have already been characterized in~\cite[Lemma 4]{LeBlanc13}, which is given below for completeness:
\begin{lemma}\cite{LeBlanc13}
    Let $n>1$ be odd and $\gamma=\lceil \frac n 2 \rceil$. Then, complete graphs are the only graphs with $n$ nodes that are $(\gamma, s)$-robust with $s\geq \lfloor \frac n 2 \rfloor$.
    \label{lem:complete_graph}
\end{lemma}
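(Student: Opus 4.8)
The plan is to prove the two inclusions separately: that the complete graph $K_n$ has the stated property, and that it is the only $n$-node graph that does.

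For the (easy) forward direction, I would verify directly that $K_n$ is $(\gamma,s)$-robust for \emph{every} $s\le n$, hence in particular for $s\ge\lfloor n/2\rfloor$. Take any disjoint nonempty $\mathcal S_1,\mathcal S_2\subset\mathcal V$. Since $|\mathcal S_1|+|\mathcal S_2|\le n=2\gamma-1<2\gamma$, one of the two parts --- say $\mathcal S_1$ --- satisfies $|\mathcal S_1|\le\gamma-1$. In $K_n$ every vertex $i\in\mathcal S_1$ has $|\mathcal N_i\setminus\mathcal S_1|=n-|\mathcal S_1|\ge\gamma$, so $\mathcal X^{\gamma}_{\mathcal S_1}=\mathcal S_1$ and the first clause of Definition~\ref{def:rs_robust} already holds.

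For the converse I would argue by contradiction. Suppose $\mathcal G$ on $n=2\gamma-1$ nodes is $(\gamma,s)$-robust with $s\ge\gamma-1=\lfloor n/2\rfloor$ but is not complete, and pick two non-adjacent vertices $u\neq w$, i.e.\ $(u,w)\notin\mathcal E$. Since being $(\gamma,s)$-robust implies being $(\gamma,s')$-robust for every $s'\le s$ (the first two clauses of Definition~\ref{def:rs_robust} do not depend on $s$, and the third only becomes easier as $s$ decreases), it suffices to contradict $(\gamma,\gamma-1)$-robustness. I would build the witnessing pair as follows: let $\mathcal S_1$ consist of $u$ together with any $\gamma-2$ vertices of $\mathcal V\setminus\{u,w\}$ --- possible since $|\mathcal V\setminus\{u,w\}|=2\gamma-3\ge\gamma-2$ --- and set $\mathcal S_2=\mathcal V\setminus\mathcal S_1$, so that $|\mathcal S_1|=\gamma-1$, $|\mathcal S_2|=\gamma$, both are nonempty, and $w\in\mathcal S_2$. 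Two observations then finish the proof: (i) since $|\mathcal S_1|=\gamma-1<\gamma$, no vertex of $\mathcal S_2$ can have $\gamma$ neighbors inside $\mathcal S_1$, so $\mathcal X^{\gamma}_{\mathcal S_2}=\emptyset$; and (ii) since $w\in\mathcal S_2$ and $(u,w)\notin\mathcal E$, the vertex $u$ has at most $|\mathcal S_2|-1=\gamma-1$ neighbors in $\mathcal S_2$, hence $u\notin\mathcal X^{\gamma}_{\mathcal S_1}$ and $|\mathcal X^{\gamma}_{\mathcal S_1}|\le|\mathcal S_1|-1=\gamma-2$. Therefore none of the three clauses of Definition~\ref{def:rs_robust} holds for $(\mathcal S_1,\mathcal S_2)$: $\mathcal X^{\gamma}_{\mathcal S_1}\neq\mathcal S_1$, $\mathcal X^{\gamma}_{\mathcal S_2}=\emptyset\neq\mathcal S_2$, and $|\mathcal X^{\gamma}_{\mathcal S_1}|+|\mathcal X^{\gamma}_{\mathcal S_2}|\le\gamma-2<\gamma-1\le s$. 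This contradicts $(\gamma,s)$-robustness, so $\mathcal G=K_n$.

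I do not expect any serious obstacle: the whole difficulty is concentrated in choosing the partition in the converse. The two deliberate design choices --- making $|\mathcal S_1|$ exactly $\gamma-1$ (which automatically forces $\mathcal X^{\gamma}_{\mathcal S_2}=\emptyset$, since $\gamma-1<\gamma$) and placing the endpoints of the absent edge on opposite sides with $w$ in the larger part $\mathcal S_2$ (which forces $u\notin\mathcal X^{\gamma}_{\mathcal S_1}$) --- make the disqualifying estimate $|\mathcal X^{\gamma}_{\mathcal S_1}|+|\mathcal X^{\gamma}_{\mathcal S_2}|\le\gamma-2$ fall out with no further case analysis on the internal structure of $\mathcal G$.
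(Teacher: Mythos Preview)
Your proof is correct. Note, however, that the paper does not actually prove this lemma: it is quoted from~\cite{LeBlanc13} (specifically, it corresponds to Property~5.19 / Lemma~4 there) and used as a black box. So there is no ``paper's own proof'' to compare against; you have supplied a clean, self-contained argument where the paper simply cites the result. Your partition choice --- putting exactly $\gamma-1$ vertices on one side to kill $\mathcal X^{\gamma}_{\mathcal S_2}$, and separating the endpoints of a missing edge to force $u\notin\mathcal X^{\gamma}_{\mathcal S_1}$ --- is the natural one and matches the spirit of the original proof in~\cite{LeBlanc13}.
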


\Cref{lem:complete_graph} essentially confirms that when $n$ is odd, the graph must be a complete graph to be a $(\gamma,\gamma)$-MERG. This also implies that when $n$ is odd, for a graph to be $(\gamma,\gamma)$-robust, it must have at least $\frac {(2\gamma-1)(2\gamma-2)}{2} = (\gamma-1)(2\gamma-1)$ edges.

\subsection{Graphs with Even $n$}

We now introduce the lower bounds on edge counts for $(\gamma,\gamma)$-robust graphs when $n$ is even. This is formally established in~\Cref{lem:min_addition4}. This lemma will be later used in Section~\ref{sec:construction2} to construct $(\gamma,\gamma)$-MERGs for any $n$. However, before that, we introduce another useful lemma from \cite{LeBlanc13}:
\begin{lemma}
    Given an $(r, s)$-robust graph $\mathcal G = (\mathcal V, \mathcal E)$, with $0 \leq r \leq \lceil \frac n 2\rceil$ and $1 \leq s \leq n$, the
minimum degree of $\mathcal G$, $\delta_{\min}(\mathcal G)$, is at least
\begin{equation}
    \delta_{\min}(\mathcal G) \geq \begin{cases}
        2r-2 & \text{if } s\geq r\\
        r+s-1 & \text{otherwise}
    \end{cases}.
    \label{eq:cases}
\end{equation}
\label{lem:min_degree}
\end{lemma}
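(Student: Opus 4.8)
The plan is to argue by contradiction, extracting from a hypothetical low-degree vertex a single pair of disjoint nonempty sets $\mathcal S_1,\mathcal S_2$ that violates all three bullets of Definition~\ref{def:rs_robust}. First I would clear away the trivial regime $r\le 1$ (then $2r-2\le 0$, and $s<r$ is impossible since $s\ge 1$, so the claimed bound is $\delta_{\min}\ge 0$) and the case $n\le 1$ (no disjoint nonempty pair exists), so that we may assume $r\ge 2$ and $n\ge 2$. I would also replace the two-case statement by the equivalent single inequality $\delta_{\min}(\mathcal G)\ge \min\{2r-2,\,r+s-1\}$: since $2r-2\le r+s-1$ iff $s\ge r-1$, the minimum equals $2r-2$ precisely when $s\ge r$ and equals $r+s-1$ precisely when $s<r$, which is exactly \eqref{eq:cases}. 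Hence it suffices to show that $d:=\delta_{\min}(\mathcal G)<\min\{2r-2,\,r+s-1\}$ forces $\mathcal G$ to fail $(r,s)$-robustness.

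Let $v$ be a vertex of degree $d$, pick an arbitrary subset $\mathcal A\subseteq\mathcal N_v$ with $|\mathcal A|=\max\{0,\,d-r+1\}$ (a legitimate choice since $\max\{0,d-r+1\}\le d=|\mathcal N_v|$), and set $\mathcal S_1=\{v\}\cup\mathcal A$ and $\mathcal S_2=\mathcal V\setminus\mathcal S_1$. I would then verify the three conditions of Definition~\ref{def:rs_robust} one at a time. For the first, the size of $\mathcal A$ gives $|\mathcal N_v\setminus\mathcal S_1|=d-|\mathcal A|=\min\{d,\,r-1\}<r$, so $v\notin\mathcal X^r_{\mathcal S_1}$ and hence $|\mathcal X^r_{\mathcal S_1}|\le|\mathcal S_1|-1=\max\{0,\,d-r+1\}$; in particular $|\mathcal X^r_{\mathcal S_1}|<|\mathcal S_1|$. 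For the second, $|\mathcal S_1|=\max\{1,\,d-r+2\}\le r-1<r$ because $d<2r-2$, so every $u\in\mathcal S_2$ has $|\mathcal N_u\setminus\mathcal S_2|=|\mathcal N_u\cap\mathcal S_1|\le|\mathcal S_1|<r$, whence $\mathcal X^r_{\mathcal S_2}=\emptyset$; combined with $\mathcal S_2\neq\emptyset$ (which holds since $|\mathcal S_1|\le r-1\le\lceil n/2\rceil-1<n$, using $r\le\lceil n/2\rceil$) this gives $|\mathcal X^r_{\mathcal S_2}|<|\mathcal S_2|$. For the third, $|\mathcal X^r_{\mathcal S_1}|+|\mathcal X^r_{\mathcal S_2}|\le\max\{0,\,d-r+1\}<s$ because $d<r+s-1$ (when $d\le r-1$ the sum is $0<s$; when $d\ge r$ it is at most $d-r+1\le s-1$). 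All three bullets failing contradicts $(r,s)$-robustness, completing the argument.

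The one delicate point, and the step I expect to need the most care, is calibrating $|\mathcal A|$: it must be at least $d-r+1$ so that $v$ by itself already spoils $\mathcal S_1$'s full $r$-reachability; it must keep $|\mathcal S_1|$ strictly below $r$ so that no vertex of the complement $\mathcal S_2$ can reach $r$ neighbors back into $\mathcal S_1$; and it must keep $|\mathcal X^r_{\mathcal S_1}|\le|\mathcal S_1|-1$ strictly below $s$. These three competing demands are exactly what produce the threshold $\min\{2r-2,\,r+s-1\}$, and once the size is pinned down the remaining work is routine counting, the only external input being the hypothesis $r\le\lceil n/2\rceil$, used solely to guarantee $\mathcal S_2\neq\emptyset$. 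A secondary thing to watch is the handful of boundary values ($d=r-1$, $d=2r-3$, $n\in\{2,3\}$), where the $\max$'s switch which branch is active; but each of these is absorbed uniformly by the $\min$/$\max$ bookkeeping above, so no separate case analysis is needed.
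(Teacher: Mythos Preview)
The paper does not prove this lemma; it is quoted without proof from \cite{LeBlanc13}. Your argument is correct and self-contained. The construction---taking $\mathcal S_1=\{v\}\cup\mathcal A$ with $|\mathcal A|=\max\{0,d-r+1\}$ neighbors of a minimum-degree vertex $v$, and $\mathcal S_2=\mathcal V\setminus\mathcal S_1$---simultaneously forces $v\notin\mathcal X^r_{\mathcal S_1}$, keeps $|\mathcal S_1|\le r-1<r$ (so $\mathcal X^r_{\mathcal S_2}=\emptyset$), and bounds $|\mathcal X^r_{\mathcal S_1}|+|\mathcal X^r_{\mathcal S_2}|\le|\mathcal A|<s$, violating all three conditions of Definition~\ref{def:rs_robust}. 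The use of $r\le\lceil n/2\rceil$ to guarantee $\mathcal S_2\neq\emptyset$ is exactly right.

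One minor wording slip: you write ``the minimum equals $2r-2$ precisely when $s\ge r$,'' but the crossover $2r-2=r+s-1$ occurs at $s=r-1$; since the two expressions coincide there, your single-inequality reformulation $\delta_{\min}\ge\min\{2r-2,r+s-1\}$ is still equivalent to \eqref{eq:cases} and nothing in the argument is affected.
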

Because we consider $r=s=\gamma$ in our paper, what we really care in \eqref{eq:cases} is the inequality for the first case: $\delta_{\min}(\mathcal G)\geq 2r-2$. Therefore, we have the following corollary:
\begin{corollary}
  Given an $(r, r)$-robust graph $\mathcal G = (\mathcal V, \mathcal E)$ of $n$ nodes, the
minimum degree of $\mathcal G$, $\delta_{\min}(\mathcal G)\geq 2(r-1)$. Furthermore, $|\mathcal E| \geq n(r-1)$.
\label{cor:min_degree}
\end{corollary}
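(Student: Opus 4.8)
The plan is to read the first claim straight off Lemma~\ref{lem:min_degree}, and then derive the edge count from the handshaking lemma. For the minimum-degree bound, I would instantiate Lemma~\ref{lem:min_degree} with $s = r$. Since the paper restricts attention to $r = s = \gamma \leq \lceil \tfrac n2\rceil$, the hypotheses $0 \leq r \leq \lceil \tfrac n2\rceil$ and $1 \leq s \leq n$ are met, and the condition $s \geq r$ of the first case in \eqref{eq:cases} holds with equality. Hence $\delta_{\min}(\mathcal G) \geq 2r - 2 = 2(r-1)$, which is exactly the first assertion.

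For the edge count, I would invoke the handshaking lemma: $\sum_{i \in \mathcal V} |\mathcal N_i| = 2|\mathcal E|$. Since every vertex has degree at least $2(r-1)$ by the bound just established, the left-hand side is at least $2(r-1)\,n$. Therefore $2|\mathcal E| \geq 2n(r-1)$, i.e. $|\mathcal E| \geq n(r-1)$, completing the proof.

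There is essentially no obstacle here — the corollary is an immediate specialization of Lemma~\ref{lem:min_degree} followed by a degree-sum argument; the only point worth stating carefully is that $s \geq r$ is satisfied (indeed with equality) so that the first branch of \eqref{eq:cases}, rather than the second, applies.
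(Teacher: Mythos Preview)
Your proposal is correct and matches the paper's proof essentially verbatim: apply Lemma~\ref{lem:min_degree} with $s=r$ to get $\delta_{\min}(\mathcal G)\geq 2(r-1)$, then use the degree-sum (handshaking) identity to conclude $|\mathcal E|\geq n(r-1)$. The only cosmetic difference is that you spell out why the first branch of \eqref{eq:cases} applies, which the paper leaves implicit.
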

\begin{proof}
    As the direct application of~\Cref{lem:min_degree}, we conclude $\delta_{\min}(\mathcal G)\geq 2(r-1)$. Then, each node has at least $2(r-1)$ neighbors. Thus, since $\mathcal G$ is an undirected graph, $|\mathcal E|\geq \frac {2n(r-1)}{2} = n(r-1)$.
\end{proof}

Corollary~\ref{cor:min_degree} provides both the minimum degree required for each node and the total number of edges necessary for a graph to be $(r,r)$-robust. 
Similar to how we have used the notion of cliques as the building blocks of our analyses in Section~\ref{sec:lower_bounds1}, we will use Corollary~\ref{cor:min_degree} to analyze graphs with even $n$. By Corollary~\ref{cor:min_degree}, we know for a graph with even $n$ to be $(\gamma,\gamma)$-MERG, it must have at least $n(\gamma-1)=2\gamma(\gamma-1)$ edges. However, by leveraging the definition of $(r,s)$-robustness (Definition~\ref{def:rs_robust}), we derive a tighter lower bound:
\begin{figure}
    \centering
\includegraphics[width=0.7\columnwidth]{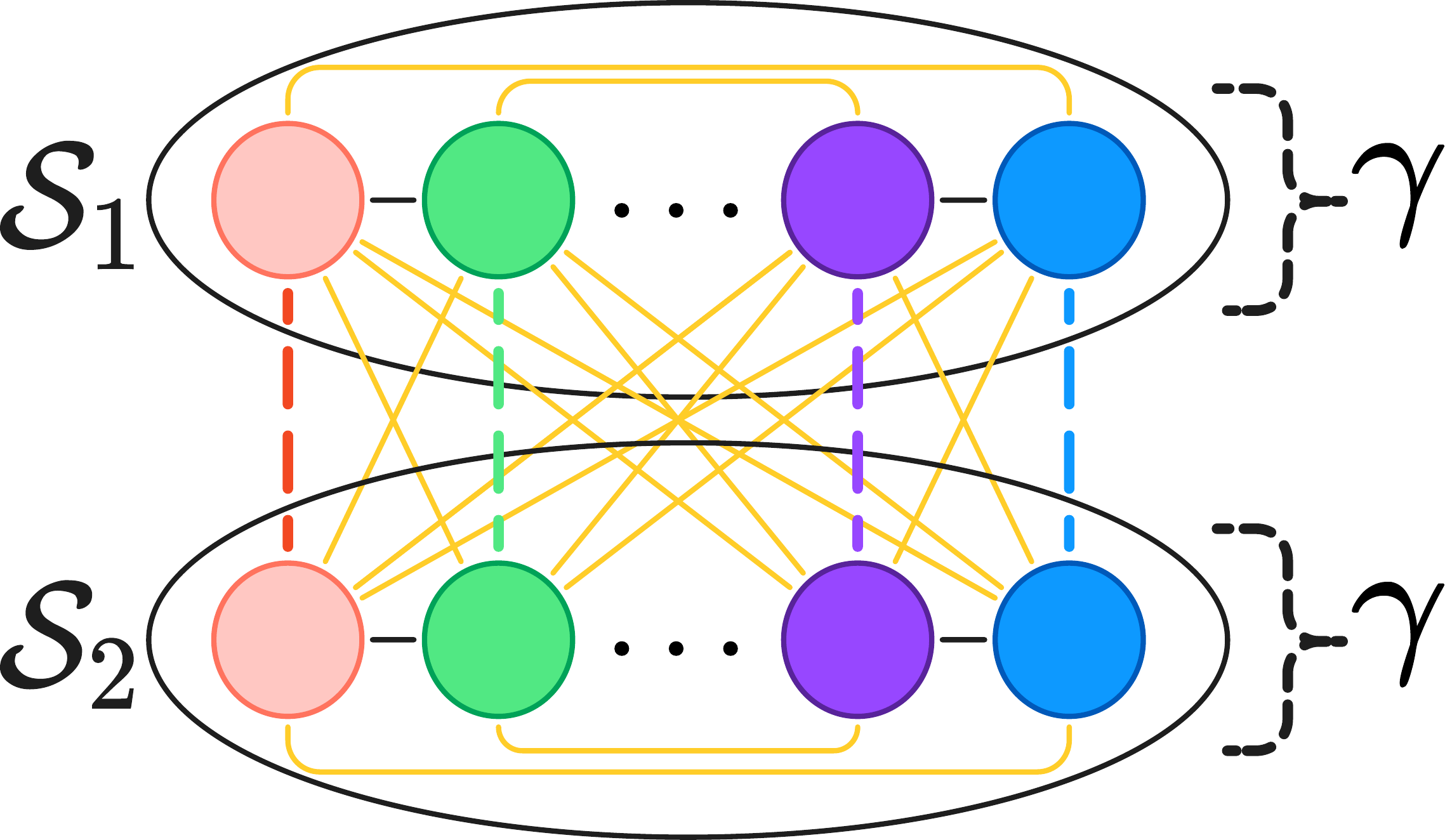}
    \caption{This figure visualizes the contradiction shown in the proof for~\Cref{lem:min_addition4}. The yellow solid lines represent edges, whereas the colored dotted lines represent the absences of edges. There are two disjoint non-empty subsets $\mathcal S_1,\mathcal S_2 \subset \mathcal V$, in which all nodes only have $\gamma-1$ neighbors outside of the subsets, necessitating at least $\lceil \frac {\gamma} 2 \rceil$ of the colored dotted lines to become actual edges to resolve the contradiction.}
    \label{fig:proof}
\end{figure}

\begin{lemma}
   Let $\mathcal G = (\mathcal V, \mathcal{E})$ be a graph of $n$ nodes where $n$ is even, and let $\gamma=\frac n 2 $. Then, if $\mathcal G$ is $(\gamma,\gamma)$-robust,
   \label{lem:min_addition4}
\begin{equation}
    |\mathcal E|\geq 2\gamma(\gamma-1) + \Big\lceil{\frac {\gamma}{2}}\Big\rceil.
    \label{eq:bound4}
\end{equation}
\end{lemma}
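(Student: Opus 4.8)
The plan is to argue by contradiction, using the minimum-degree bound of Corollary~\ref{cor:min_degree} to force the complement graph to be extremely sparse. Assume $\mathcal{G}$ is $(\gamma,\gamma)$-robust. By Corollary~\ref{cor:min_degree} we have $\delta_{\min}(\mathcal{G}) \geq 2\gamma-2$ and $|\mathcal{E}| \geq 2\gamma(\gamma-1)$; since $n=2\gamma$, each vertex has at most $(2\gamma-1)-(2\gamma-2)=1$ non-neighbor, so the complement $\bar{\mathcal{G}}$ is a matching. Suppose, toward a contradiction, that $|\mathcal{E}| \leq 2\gamma(\gamma-1)+\lceil \gamma/2\rceil-1$. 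Writing $m$ for the number of non-edges, $m=\binom{2\gamma}{2}-|\mathcal{E}|$, the lower and upper bounds on $|\mathcal{E}|$ together give $\lfloor \gamma/2\rfloor+1 \leq m \leq \gamma$; thus $\bar{\mathcal{G}}$ is a matching with $m$ edges and $\lfloor \gamma/2\rfloor+1 \leq m \leq \gamma$.

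I would then exhibit a single pair $\mathcal{S}_1,\mathcal{S}_2$ violating $(\gamma,\gamma)$-robustness. Partition $\mathcal{V}$ into $\mathcal{S}_1$ and $\mathcal{S}_2$, each of size $\gamma$, by placing the two endpoints of every non-edge on opposite sides and then splitting the remaining $2\gamma-2m$ vertices (which are adjacent to all others) evenly, $\gamma-m$ to each side; this is possible precisely because $m\leq\gamma$. The crux is an elementary count: for $i\in\mathcal{S}_1$, $|\mathcal{N}_i\setminus\mathcal{S}_1| = |\mathcal{N}_i\cap\mathcal{S}_2| = \gamma - (\text{number of non-neighbors of } i \text{ in } \mathcal{S}_2)$, and since each vertex has at most one non-neighbor, $i\in\mathcal{X}^\gamma_{\mathcal{S}_1}$ if and only if $i$ has no non-neighbor in $\mathcal{S}_2$. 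By construction the $m$ matched vertices lying in $\mathcal{S}_1$ each have their non-neighbor in $\mathcal{S}_2$, while the other $\gamma-m$ vertices of $\mathcal{S}_1$ have no non-neighbor at all; hence $|\mathcal{X}^\gamma_{\mathcal{S}_1}| = \gamma-m$, and symmetrically $|\mathcal{X}^\gamma_{\mathcal{S}_2}| = \gamma-m$.

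Finally I would check that all three clauses of Definition~\ref{def:rs_robust} fail for this pair. Since $m\geq1$, $|\mathcal{X}^\gamma_{\mathcal{S}_1}| = |\mathcal{X}^\gamma_{\mathcal{S}_2}| = \gamma-m < \gamma = |\mathcal{S}_1| = |\mathcal{S}_2|$, which rules out the first two clauses; and $|\mathcal{X}^\gamma_{\mathcal{S}_1}| + |\mathcal{X}^\gamma_{\mathcal{S}_2}| = 2(\gamma-m) \leq 2(\gamma - \lfloor\gamma/2\rfloor - 1) = 2\lceil\gamma/2\rceil - 2 < \gamma$ for every $\gamma\geq1$, which rules out the third. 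This contradicts $(\gamma,\gamma)$-robustness, so $|\mathcal{E}| \geq 2\gamma(\gamma-1)+\lceil\gamma/2\rceil$, establishing \eqref{eq:bound4}. The one genuinely nontrivial step is the observation that Corollary~\ref{cor:min_degree} pins $\bar{\mathcal{G}}$ down to a matching: this collapses what would otherwise be a delicate optimization over all balanced bipartitions into the one-line degree count above, after which only routine bookkeeping with the floor/ceiling functions (and the parity of $\gamma$) remains.
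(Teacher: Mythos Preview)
Your proof is correct and follows essentially the same strategy as the paper's: use the minimum-degree bound to see that $\bar{\mathcal G}$ is a matching, then separate matched pairs across a balanced bipartition $|\mathcal S_1|=|\mathcal S_2|=\gamma$ and check that all three clauses of Definition~\ref{def:rs_robust} fail. If anything, your version is slightly more rigorous than the paper's, which only treats the $(2\gamma-2)$-regular case explicitly (i.e.\ $m=\gamma$) and then argues informally that ``resolving the contradiction'' costs $\lceil \gamma/2\rceil$ additional edges, whereas you assume $|\mathcal E|\le 2\gamma(\gamma-1)+\lceil\gamma/2\rceil-1$ directly and handle every $m$ in the resulting range $\lfloor\gamma/2\rfloor+1\le m\le\gamma$ uniformly.
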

\begin{proof}
    By Corollary~\ref{cor:min_degree}, we know $\delta_{\min}(\mathcal G)\geq 2(\gamma-1)$ and $|\mathcal E|\geq (2\gamma)(\gamma-1)$. In the rest of the proof, we will show that some nodes need more than $2(\gamma-1)$ neighbors for $\mathcal G$ to satisfy definition of $(r,s)$-robustness (Definition~\ref{def:rs_robust}). Assume to the contrary that every node $i \in \mathcal V$ only has $2(\gamma-1)$ neighbors and $\mathcal G$ is $(\gamma,\gamma)$-robust. Because (i) there are $2\gamma$ nodes and (ii) each node has $2(\gamma-1)$ neighbors, it is possible to form disjoint $\gamma$ pairs of nodes that do not have an edge together. Now, form a disjoint non-empty subsets $\mathcal S_1, \mathcal S_2 \subset \mathcal V$ such that each subset contains one node of each pair that do not have an edge. Since we have $\gamma$ pairs of nodes without an edge, $|\mathcal S_1|=|\mathcal S_2|=\gamma$. By the definition of $(r,s)$-robustness, we need at least $\gamma$ nodes that have $\gamma$ or more neighbors outside of their subsets. However, we have reached a contradiction, as all nodes in $\mathcal S_1$ and $\mathcal S_2$ have only $\gamma-1$ neighbors outside of their subsets. To resolve the contradiction, we need at least $\lceil \frac {\gamma} {2} \rceil$ edges, which are used to create edges between at least $\gamma$ nodes (as shown in Fig~\ref{fig:proof}). Thus, combining everything, we need at least $2\gamma(\gamma-1)+\lceil \frac {\gamma} {2} \rceil$ edges. 
\end{proof}

\Cref{lem:min_addition4} provides the necessary conditions on the edge counts of $(\gamma,\gamma)$-robust graphs with even numbers of nodes. 
Combining Lemmas~\ref{lem:complete_graph} and~\ref{lem:min_addition4}, we will construct $(\gamma,\gamma)$-MERGs in the next section.

\section{Construction of $(\gamma,\gamma)$-MERGs}
\label{sec:construction2}

\begin{figure}
    \centering
\includegraphics[width=1\linewidth]{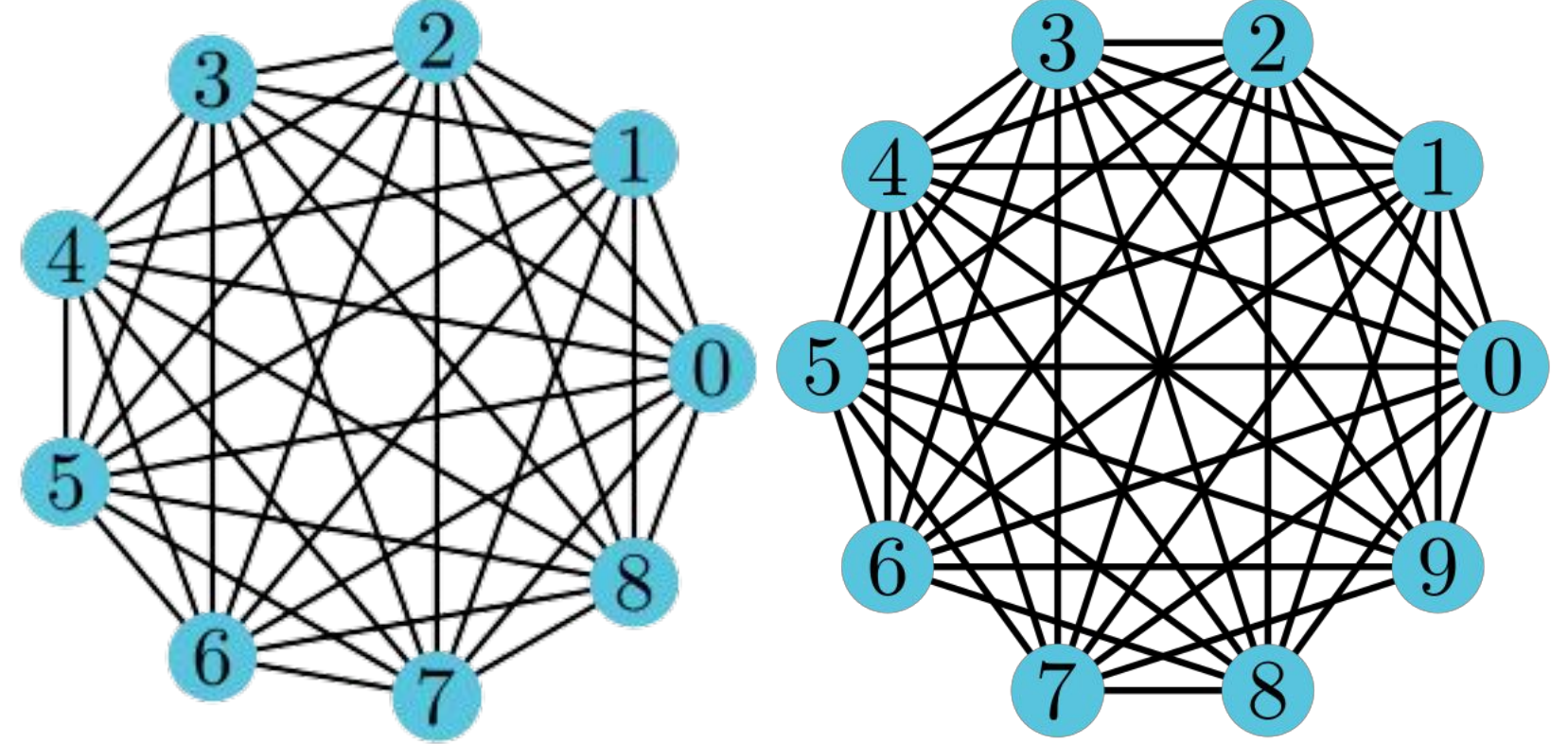}
    \caption{Visualizations of $(5,5)$-MERGs with $9$ (left) and $10$ (right) nodes.}
    \label{fig:example2}
\end{figure}

In the previous section, we presented the lower bounds on edge counts that any graph with $n$ nodes must satisfy in order to achieve $(\gamma,\gamma)$-robustness, where $\gamma=\lceil \frac n 2 \rceil$. Now, we present our second main result by constructing $(\gamma,\gamma)$-MERGs, or graphs that achieve $(\gamma,\gamma)$-robustness with a minimal set of edges:

\begin{theorem}
    Let $\mathcal G = (\mathcal V, \mathcal E)$ be a graph of $n$ nodes, and let $\gamma=\lceil \frac n 2 \rceil$. Let every node have $2(\gamma-1)$ neighbors. If $n$ is even, find $\lceil \frac \gamma 2\rceil$ pairs of nodes without an edge and connect them. Then, $\mathcal G$ is a $(\gamma,\gamma)$-MERG.
    \label{thm:max_rs}
\end{theorem}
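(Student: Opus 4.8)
The plan is to separate the odd and even cases, as in Theorem~\ref{thm:max_r}, and in each case first verify $(\gamma,\gamma)$-robustness and then match the edge count against the lower bounds already in hand. For odd $n$ we have $n=2\gamma-1$, so the prescribed degree $2(\gamma-1)$ equals $n-1$ and the construction forces $\mathcal G=K_n$. Then $\mathcal G$ is $(\gamma,n)$-robust by \cite[Lemma 4]{LeBlanc13}, hence $(\gamma,\gamma)$-robust, and Lemma~\ref{lem:complete_graph} (applied with $s=\gamma\ge\lfloor n/2\rfloor$) shows $K_n$ is the unique $n$-node graph with this property, so it is trivially edge-minimal; the count $\frac{1}{2}\,n\cdot 2(\gamma-1)=\binom n2$ is consistent with this.

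For even $n$ we have $n=2\gamma$, and $2(\gamma-1)=n-2$-regularity means the complement of the initial graph is $1$-regular, i.e.\ a perfect matching on $\gamma$ non-edges; connecting $\lceil\gamma/2\rceil$ of these pairs leaves a residual matching $M$ of non-edges with $|M|=\lfloor\gamma/2\rfloor$. To check $(\gamma,\gamma)$-robustness, take disjoint nonempty $\mathcal S_1,\mathcal S_2\subset\mathcal V$ with $|\mathcal S_1|\le|\mathcal S_2|$. If $|\mathcal S_1|\le\gamma-1$, then every $i\in\mathcal S_1$ satisfies $|\mathcal N_i\setminus\mathcal S_1|\ge 2(\gamma-1)-(|\mathcal S_1|-1)\ge\gamma$, so $\mathcal X^\gamma_{\mathcal S_1}=\mathcal S_1$ and the first clause of Definition~\ref{def:rs_robust} holds. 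Otherwise $|\mathcal S_1|=|\mathcal S_2|=\gamma$ and $\{\mathcal S_1,\mathcal S_2\}$ partitions $\mathcal V$; here for $i\in\mathcal S_k$ one has $|\mathcal N_i\setminus\mathcal S_k|=|\mathcal N_i\cap\mathcal S_{3-k}|=\gamma$ unless $i$'s unique $M$-partner lies in the opposite set, in which case it equals $\gamma-1$. Letting $t$ be the number of edges of $M$ split across the partition (each such edge puts exactly one deficient node on each side), I get $|\mathcal S_1\setminus\mathcal X^\gamma_{\mathcal S_1}|=|\mathcal S_2\setminus\mathcal X^\gamma_{\mathcal S_2}|=t\le|M|=\lfloor\gamma/2\rfloor$, hence $|\mathcal X^\gamma_{\mathcal S_1}|+|\mathcal X^\gamma_{\mathcal S_2}|=2\gamma-2t\ge 2\gamma-2\lfloor\gamma/2\rfloor\ge\gamma$, so the third clause of Definition~\ref{def:rs_robust} holds.

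For minimality in the even case the edge count is $\frac{1}{2}(2\gamma)\,2(\gamma-1)+\lceil\gamma/2\rceil=2\gamma(\gamma-1)+\lceil\gamma/2\rceil$, which is exactly the lower bound of Lemma~\ref{lem:min_addition4}, so $\mathcal G$ is a $(\gamma,\gamma)$-MERG. The only genuinely delicate step is the balanced-partition subcase: one must use that each node has at most one non-neighbor (so that the deficit on each side is controlled solely by the number of split pairs $t$), and then observe that the design choice $|M|=\lfloor\gamma/2\rfloor$ is precisely what forces $2\gamma-2t\ge\gamma$ for every placement of the split pairs — adding one more non-edge to $M$ would break this, which is also the reason the bound in Lemma~\ref{lem:min_addition4} is tight. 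Everything else is elementary degree counting.
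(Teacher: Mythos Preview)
Your proof is correct and follows essentially the same approach as the paper: both split into odd/even cases, invoke Lemma~\ref{lem:complete_graph} for odd $n$, and for even $n$ handle the unbalanced case by a degree count and the balanced case by showing at least $\gamma$ nodes retain $\gamma$ neighbors across the cut, then match the edge count to Lemma~\ref{lem:min_addition4}. Your phrasing of the balanced case via the residual complement matching $M$ and the split-edge parameter $t$ is a slightly cleaner packaging of the same observation the paper makes by counting the $2\lceil\gamma/2\rceil\ge\gamma$ full-degree nodes, but the underlying argument is identical.
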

\begin{proof}

\textbf{Odd $n$: } Because all $2\gamma-1$ nodes have $2(\gamma-1)$ neighbors, that means $\mathcal G$ forms a complete graph. Therefore by~\Cref{lem:complete_graph}, $\mathcal G$ is a $(\gamma,\gamma)$-MERG when $n$ is odd.



\textbf{Even $\mathbf n$:} To prove that $\mathcal G$ is $(\gamma,\gamma)$-MERG for even values of $n$, we need to prove two things: (i) it is $(\gamma,\gamma)$-robust and (ii) its edge count $|\mathcal E|$ equals to the lower bounds specified in~\Cref{lem:min_addition4}.

First, we prove its robustness. WLOG, we form disjoint non-empty subsets $\mathcal S_1, \mathcal S_2 \subset \mathcal V$ such that $|\mathcal S_1|\leq |\mathcal S_2|$, which implies $1\leq|\mathcal S_1|\leq \gamma$. By construction, there are at least $\gamma$ nodes in $\mathcal G$ that are connected to all other nodes (i.e., they has edges with $2\gamma-1$ nodes). There are two different cases:

1) If $|\mathcal S_1|\leq \gamma-1$, $|\mathcal N_i\setminus \mathcal S_1|\geq |\mathcal N_i| - \gamma-2\geq \gamma$ $\forall i\in \mathcal S_1$ since $|\mathcal N_i|\geq 2(\gamma-1)$ $\forall i \in \mathcal V$. Thus, every node in $\mathcal S_1$ has at least $\gamma$ neighbors outside of $\mathcal S_1$, which satisfies the first condition given in Definition~\ref{def:rs_robust}.

2) If $|\mathcal S_1|=|\mathcal S_2|=\gamma$, note that we have at least $\gamma$ nodes that have $2\gamma-1$ neighbors. Thus, however we form $\mathcal S_1$ and $\mathcal S_2$, there will be at least $\gamma=\frac n 2$ nodes in $\mathcal S_1$ and $\mathcal S_2$ combined that have at least $2\gamma-1-(\gamma-1)=\gamma$ neighbors outside of $\mathcal S_1$ and $\mathcal S_2$, respectively. This satisfies the third condition given in Definition~\ref{def:rs_robust}. Since $\mathcal G$ satisfies Definition~\ref{def:rs_robust} with $r=s=\gamma$ in all cases, $\mathcal G$ is $(\gamma,\gamma)$-robust.

Now we show that $\mathcal E$ is a minimal set. We have a total degree of $4\gamma(\gamma-1) + K$, where $K=\gamma+1$ if $\gamma$ is odd and or $K=\gamma$ if $\gamma$ is even. Thus, the total number of edges is $|\mathcal E|= \frac {4\gamma(\gamma-1) + K}{2} = \frac {K} {2} + 2\gamma(\gamma-1)$,
which equals the lower bound given in~\Cref{lem:min_addition4}. 
\end{proof}

\Cref{thm:max_rs} addresses Problem~\ref{prob:second} by constructing $(\gamma,\gamma)$-MERGs, graphs that achieve maximum $(r,s)$-robustness with a minimal set of edges. Similar to $\gamma$-MERGs, these graphs also have as many edges as specified by the lower bounds in Lemmas~\ref{lem:complete_graph} and~\ref{lem:min_addition4}. This shows that the lower bounds given in the lemmas are tight. In fact, removing any edge from $(\gamma,\gamma)$-MERGs immediately reduces the graphs' robustness and thus resilience to misbehaving agents.  

Examples of $(\gamma,\gamma)$-MERGs as the results of~\Cref{thm:max_rs} are visualized in Figure~\ref{fig:example2}. The graph on the left has $n=9$ nodes and forms a complete graph to achieve $(5,5)$-robustness, as discussed in~\Cref{lem:complete_graph}. On the contrary, the graph on the right has $n=10$ nodes and is $(5,5)$-robust. 


Furthermore, we highlight that the construction mechanisms given in~\Cref{thm:max_rs} are necessary and sufficient in a sense that they encompass all the possible variations of $(\gamma,\gamma)$-MERGs.  
\begin{prop}
    \label{prop:subgraph}
    Let $\mathcal G = (\mathcal V, \mathcal E)$ be a graph of $n$ nodes, and let $\gamma=\lceil \frac n 2 \rceil$. Then, $\mathcal G$ is $(\gamma,\gamma)$-robust if and only if $\mathcal G$ contains a graph $\mathcal G'=(\mathcal V, \mathcal E')$ from~\Cref{thm:max_rs} as a spanning subgraph (i.e., $\mathcal E' \subseteq \mathcal E$).
\end{prop}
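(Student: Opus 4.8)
The plan is to prove the two implications separately. The reverse direction ($\Leftarrow$) is an edge-monotonicity argument: if an edge $(i,j)$ is added to any graph, then for every set $\mathcal S$ and every $i\in\mathcal S$ the quantity $|\mathcal N_i\setminus\mathcal S|$ can only stay the same or increase, so each set $\mathcal X^r_{\mathcal S}$ only grows and hence each of the three alternatives in Definition~\ref{def:rs_robust} is preserved under edge addition. Since Theorem~\ref{thm:max_rs} guarantees that every graph $\mathcal G'$ it produces is $(\gamma,\gamma)$-robust, and $\mathcal G$ contains such a $\mathcal G'$ as a spanning subgraph (same vertex set, $\mathcal E'\subseteq\mathcal E$), the graph $\mathcal G$ is $(\gamma,\gamma)$-robust as well.

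For the forward direction ($\Rightarrow$) I would split on the parity of $n$. When $n$ is odd, $2(\gamma-1)=n-1$, so the construction of Theorem~\ref{thm:max_rs} outputs exactly the complete graph $K_n$; and Lemma~\ref{lem:complete_graph} forces any $(\gamma,\gamma)$-robust graph on an odd number of vertices to be complete, so $\mathcal G=K_n$ trivially contains $K_n$ as a spanning subgraph. When $n=2\gamma$ is even, Corollary~\ref{cor:min_degree} gives $\delta_{\min}(\mathcal G)\ge 2(\gamma-1)=n-2$, hence the complement $\overline{\mathcal G}$ has maximum degree at most $1$, i.e.\ $\overline{\mathcal G}$ is a matching, say with $m$ edges. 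By Lemma~\ref{lem:min_addition4}, $|\mathcal E|\ge 2\gamma(\gamma-1)+\lceil\gamma/2\rceil$, so $m=\binom{2\gamma}{2}-|\mathcal E|\le \gamma-\lceil\gamma/2\rceil=\lfloor\gamma/2\rfloor$.

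I would then extend $\overline{\mathcal G}$ to a matching $M$ of size exactly $\lfloor\gamma/2\rfloor$ in $K_{2\gamma}$; this is possible because $\overline{\mathcal G}$ leaves $2\gamma-2m\ge \gamma$ vertices uncovered, which suffices for the $2(\lfloor\gamma/2\rfloor-m)$ additional endpoints needed. Setting $\mathcal G'$ to be the complement of $M$ in $K_{2\gamma}$ yields $\overline{\mathcal G}\subseteq M=\overline{\mathcal G'}$, i.e.\ $\mathcal E'\subseteq\mathcal E$, so $\mathcal G'$ is a spanning subgraph of $\mathcal G$. The last thing to check is that $\mathcal G'$ is genuinely an output of the construction in Theorem~\ref{thm:max_rs}: since $2\gamma-2\lfloor\gamma/2\rfloor=2\lceil\gamma/2\rceil$ is even, the vertices not covered by $M$ can be paired into $\lceil\gamma/2\rceil$ edges, extending $M$ to a perfect matching $M'$ of $K_{2\gamma}$; the complement $H$ of $M'$ is then $2(\gamma-1)$-regular, and $\mathcal G'=H$ together with the $\lceil\gamma/2\rceil$ edges of $M'\setminus M$, each joining a pair of vertices non-adjacent in $H$ --- which is precisely the recipe ``let every node have $2(\gamma-1)$ neighbors, then connect $\lceil\gamma/2\rceil$ pairs of nodes without an edge.'' This completes the forward direction.

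The main obstacle I anticipate is the even-$n$ bookkeeping in the last step: one must argue that an \emph{arbitrary} matching of size $\lfloor\gamma/2\rfloor$ in $K_{2\gamma}$ is realizable as the complement of a Theorem~\ref{thm:max_rs} graph, which hinges on the parity identity $2\gamma-2\lfloor\gamma/2\rfloor=2\lceil\gamma/2\rceil$ and on the fact that a $2(\gamma-1)$-regular graph on $2\gamma$ vertices is exactly the complement of a perfect matching. The remaining ingredients --- the degree lower bound and the edge lower bound --- are supplied directly by Corollary~\ref{cor:min_degree} and Lemma~\ref{lem:min_addition4}, while the reverse direction needs only the elementary monotonicity of $(r,s)$-robustness under edge addition.
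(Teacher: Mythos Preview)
Your proposal is correct and uses the same ingredients as the paper: monotonicity of $(r,s)$-robustness for the reverse direction, Lemma~\ref{lem:complete_graph} for odd $n$, and the combination of Corollary~\ref{cor:min_degree} and Lemma~\ref{lem:min_addition4} for even $n$. The only difference is in how the even-$n$ forward direction is executed. The paper argues by contradiction: it observes that every vertex has degree in $\{2\gamma-2,2\gamma-1\}$, asserts that if no Theorem~\ref{thm:max_rs} graph embeds then $\mathcal G$ must have fewer than $2\lceil\gamma/2\rceil$ vertices of full degree, and derives $|\mathcal E|<2\gamma(\gamma-1)+\lceil\gamma/2\rceil$, contradicting Lemma~\ref{lem:min_addition4}. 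You instead construct $\mathcal G'$ directly by extending $\overline{\mathcal G}$ (a matching of size at most $\lfloor\gamma/2\rfloor$) first to a matching of size exactly $\lfloor\gamma/2\rfloor$ and then to a perfect matching, reading off $\mathcal G'$ as the complement. Your constructive version is a bit more transparent, since it makes explicit \emph{why} any small enough complement-matching can be completed to a Theorem~\ref{thm:max_rs} output --- a step the paper's contradiction argument leaves implicit.
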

\begin{proof}

    (necessary) \textbf{Odd $n$: } It is shown in~\Cref{lem:complete_graph} that complete graph is the only graph with $n$ nodes that is $(\gamma,s)$-robust where $s\geq \lfloor \frac n 2 \rfloor$. This indicates that $\mathcal G$ must be a complete graph, which means $\mathcal E' = \mathcal E$, as $\mathcal G'$ from~\Cref{thm:max_rs} is also a complete graph.

    \textbf{Even $n$: }
    Here we prove the claim by contradiction. Assume by contradiction that $\mathcal G$ is $(\gamma,\gamma)$-robust and $\mathcal E' \not \subseteq \mathcal E$. Let $\mathcal N_i$ and $\mathcal N_i'$ be the neighbor sets of agent $i\in \mathcal V$ in $\mathcal G$ and $\mathcal G'$, respectively. Then, we define $\mathcal M=\{i \in \mathcal V\mid |\mathcal N_i|=2\gamma-1\}$, $\mathcal P=\{i \in \mathcal V\mid |\mathcal N_i|=2\gamma-2\}$, $\mathcal M'=\{i \in \mathcal V\mid |\mathcal N_i'|=2\gamma-1\}$, and $\mathcal P'=\{i \in \mathcal V\mid |\mathcal N_i'|=2\gamma-2\}$.
    
    Note $\mathcal G'$ has $|\mathcal M'|=2\cdot \lceil \frac \gamma 2 \rceil$ nodes with degree of $2\gamma-1$ and $|\mathcal P'|=2\gamma-|\mathcal M|$ nodes with degree of $2\gamma-2$. By Corollary~\ref{cor:min_degree}, all nodes cannot have less than $2\gamma-2$ neighbors for $\mathcal G$ to be $(\gamma,\gamma)$-robust. Furthermore, they cannot have more than $2\gamma-1$ neighbors. What that means is that the only way for $\mathcal E' \not \subseteq \mathcal E$ is to have $|\mathcal M|<|\mathcal M'|$. Then, $|\mathcal E|<2\gamma(\gamma-1)+\lceil \frac \gamma 2 \rceil$. However, that is a contradiction, as for $\mathcal G$ to be $(\gamma,\gamma)$-robust, $|\mathcal E|\geq 2\gamma(\gamma-1)+\lceil \frac \gamma 2 \rceil$ by~\Cref{lem:min_addition4}. 


    (sufficiency) We know that $\mathcal G'$ is a $(\gamma,\gamma)$-MERG, i.e., $(\gamma,\gamma)$-robust graph with a minimal set of edges. Since $(r,s)$-robustness is monotonic with respect to number of edges~\cite[Lemma~3]{LeBlanc13}, if $\mathcal E' \subseteq \mathcal E$, $\mathcal G$ must be $(\gamma,\gamma)$-robust. 
\end{proof}
\textcolor{black}{It follows from~\Cref{prop:subgraph} that every $(\gamma,\gamma)$-MERG - that is, every $(\gamma,\gamma)$-robust graph with a minimal edge set - can be constructed using the procedure in~\Cref{thm:max_rs}.} Consequently, it is possible to use~\Cref{prop:subgraph} to quickly verify whether a given graph $\mathcal G$ is $(\gamma,\gamma)$-robust by checking whether it contains $(\gamma,\gamma)$-MERGs from \Cref{thm:max_rs} as a spanning subgraph. While finding such subgraphs is in general NP-complete, many efficient algorithms exist (e.g.,~\cite{subgraph_matching1,subgraph_matching2}). 



\begin{figure*}[ht!]
    \centering
\includegraphics[width=1\linewidth]{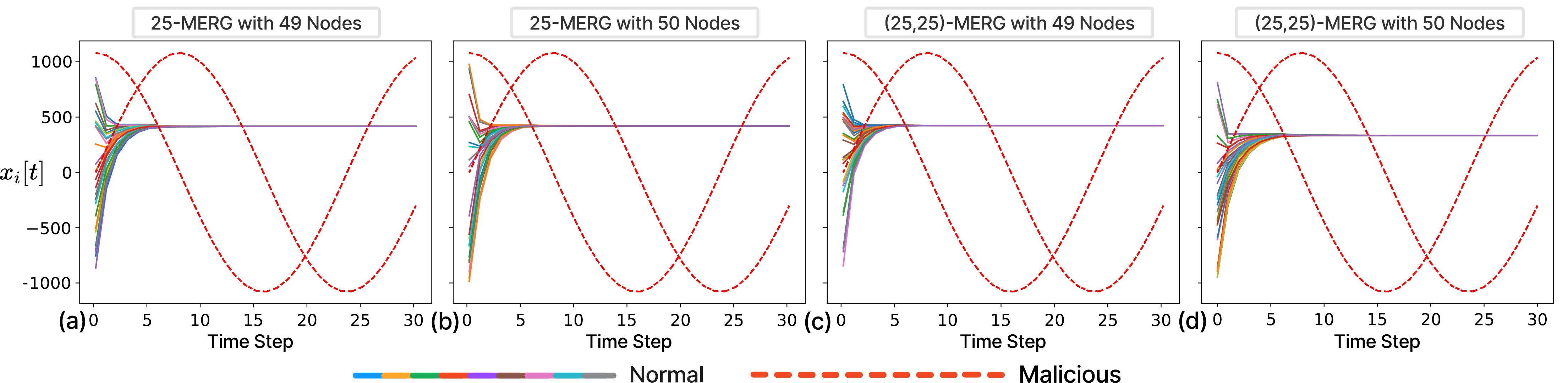}
    \caption{Simulation results for the first set of simulations. We run consensus on $25$-MERGs with (a) $49$ and (b) $50$ nodes, and $(25,25)$-MERGs with (c) $49$ and (d) $50$ nodes, demonstrating resilient consensus through the W-MSR algorithm under $F=12$ and $F=24$ malicious agents, respectively. The normal agents' states are solid colored lines, while malicious agents' states are shown as red dotted lines.}
    \label{fig:consensus}
\end{figure*}

\begin{figure*}[h]
    \centering
\includegraphics[width=1\linewidth]{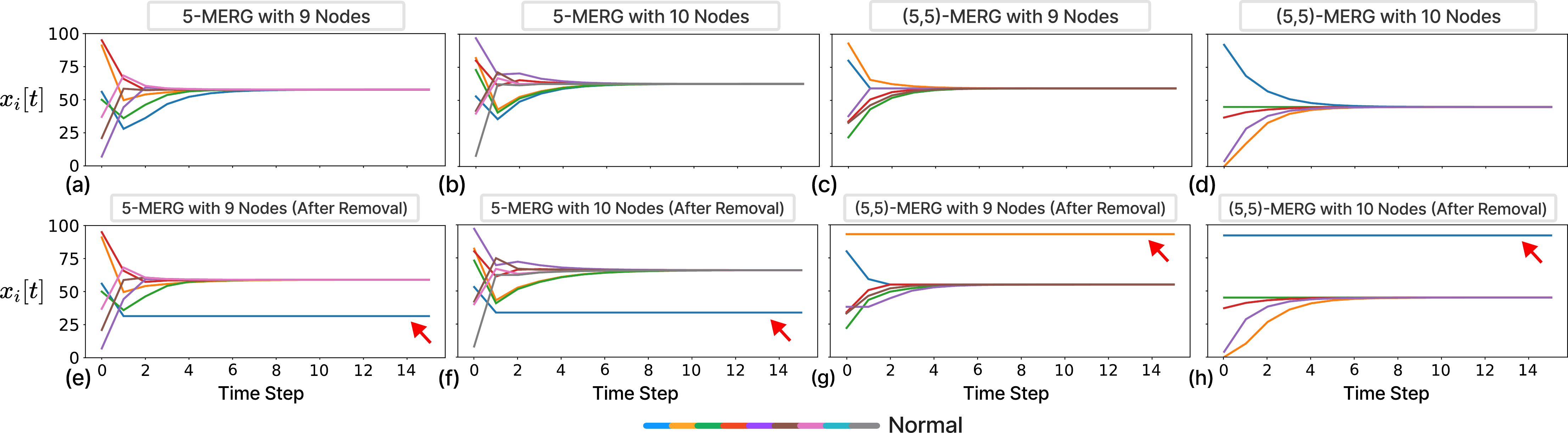}
    \caption{Simulation results for the second set of simulations. Consensus with $F = 2$ Byzantine agents is performed on $5$-MERGs with $n = 9$ (panels (a) and (e)) and $n = 10$ (panels (b) and (f)), both without and with an edge removal, respectively. Similarly, consensus with $F = 4$ malicious agents is executed on $(5,5)$-MERGs with $n = 9$ (panels (c) and (g)) and $n = 10$ (panels (d) and (h)), also without and with an edge removal, respectively. The states of normal agents are shown as solid lines, while those of misbehaving agents are omitted for visual clarity.}
    \label{fig:consensus2}
\end{figure*}

\section{Comparison with Other Graphs}
{\color{black}
\label{sec:comparison}
\begin{table}[th]
\centering
\caption{Number of edges required to achieve maximum robustness for different graph construction methods as the number of nodes \(n\) increases.}
\label{tab:edge_comparison_single}
\setlength{\tabcolsep}{4pt}
\begin{tabular}{c c c c}
\toprule
\(n\) & Graphs & Edge Count & Edge Count \\ & &($\gamma$-robustness) & ($(\gamma,\gamma)$-robustness) \\
\midrule
\multirow{3}{*}{5}
& $k$-Circulant~\cite{pirani2023} & 10 & 10\\
& $F$-Elemental~\cite{Guerrero17} & 9 & - \\
& Ours &  9 & 10\\
\midrule
\multirow{3}{*}{30}
& $k$-Circulant~\cite{pirani2023}  & 435 & 435 \\
& $F$-Elemental~\cite{Guerrero17} & - & -\\
& Ours &  323 & 428\\
\midrule
\multirow{3}{*}{105}
& $k$-Circulant~\cite{pirani2023}  & 5460 & 5460\\
& $F$-Elemental~\cite{Guerrero17} & 4134 & - \\
& Ours & 4134 & 5460\\
\midrule
\multirow{3}{*}{300}
& $k$-Circulant~\cite{pirani2023}  & 44850 & 44850 \\
& $F$-Elemental~\cite{Guerrero17} & - & -\\
& Ours & 33601 & 44775\\
\bottomrule
\end{tabular}
\end{table}

In this section, we compare $\gamma$- and $(\gamma,\gamma)$-MERGs from Theorems~\ref{thm:max_r} and~\ref{thm:max_rs} with other existing robust graphs in terms of edge counts. The goal is to highlight that our construction methods require less edges than other existing methods. We focus on two classes of graphs for comparison: undirected $k$-circulant graphs~\cite{pirani2023, zhang2015} and $F$-elemental graphs~\cite{Guerrero17}. Table~\ref{tab:edge_comparison_single} summarizes the number of edges required to achieve maximum robustness for each class of graphs as the number of nodes $n$ increases.

Undirected $k$-circulant graph is $\lceil k/2 \rceil$-robust and at least $(\lfloor (k+2)/2 \rfloor, \lfloor (k+2)/2 \rfloor)$-robust for even $k$ and $(\lfloor (k+1)/2 \rfloor, \lfloor (k+1)/2 \rfloor)$-robust for odd $k$~\cite{pirani2023, usevitch2017circulant}. Consequently, achieving the maximum robustness level requires the graph to be fully connected, leading to a high number of edges, as shown in Table~\ref{tab:edge_comparison_single}. On the other hand, $F$-Elemental graphs constructed with~\cite[Prop. 1]{Guerrero17} (discussed in~\Cref{remark:similiarity}) tend to use as few edges as ours. However, these graphs are defined only $r$-robust graphs with odd $n$ and provide no analysis of edge minimality. In contrast, as supported by our analyses in Sections~\ref{sec:lower_bounds1} and~\ref{sec:lower_bounds2}, our graphs constructed according to Theorems~\ref{thm:max_r} and~\ref{thm:max_rs} consistently use the fewest edges for any $n$ and for both $r$- and $(r,s)$-robustness.

}

\section{Simulations}
\label{sec:sim}
{\color{black}In the previous sections, we have established systematic ways to construct $\gamma$- and $(\gamma,\gamma)$-MERGs, which provably achieve maximum $r$- and $(r,s)$-robustness with the fewest edges possible. We now validate the correctness of our results through two sets of simulations (presented in Section~\ref{subsub:first_sim} and Section~\ref{subsub:second_sim}). While $\gamma$- and $(\gamma,\gamma)$-MERGs support a variety of resilient algorithms, including estimation, optimization, and learning~\cite{liewi2021byzantine, sundaram2019distributed_opt, xie2021towards} as well as different resilient consensus algorithms~\cite{yuan2021, saldana2017, koushkbaghi2024byzantine, wang2025resilience}, here we focus specifically on resilient consensus via the W-MSR algorithm.} 

For all simulations, normal agents in the network execute the W-MSR algorithm to achieve consensus despite the presence of misbehaving agents, which transmit adversarial values to disrupt consensus. \textcolor{black}{Through the W-MSR algorithm, normal agents’ states are guaranteed to converge to a common value within the convex hull of their initial conditions, \textit{irrespective of the behaviors of the misbehaving agents}~\cite{LeBlanc13}.} The results are shown in Figures~\ref{fig:consensus} and~\ref{fig:consensus2}.

\subsection{Consensus Performance\label{subsub:first_sim}}
In the first set of simulations, we demonstrate that in the presence of $F$ malicious agents, indexed by $i\in\{0,\dots, F-1\}$, normal agents $i\in \{F,\dots, n-1\}$ in $\gamma$- and $(\gamma,\gamma)$-MERGs are guaranteed to achieve resilient consensus. Each normal agent $i$'s initial state $x_i[0] \in \mathbb R$ is randomly generated on the interval $[-1000,1000]$. At time step $t\geq 0$, a malicious agent $i$ transmits an adversarial value: $1080\cos(t/5)$ if $i$ is even, and $1080\sin(t/5)$ if $i$ is odd. 

We consider networks with $n=49$ and $n=50$ agents in all simulations. Maximum $r$-robustness any network with $49$ and $50$ agents achieve is $r=\lceil \frac n 2 \rceil = 25$-robustness. Thus, using~\Cref{thm:max_r}, we construct $25$-MERGs with $49$ and $50$ nodes. Note these graphs achieve $25$-robustness with the fewest edges possible for a given number of nodes, and it tolerates at most $F=12$-local malicious agents. Thus, we run consensus simulations on these network graphs with $12$ malicious agents, and the results are shown in Figure~\ref{fig:consensus} (a) and (b). Similarly, the maximum possible $(r,s)$-robustness any network with $49$ and $50$ agents achieves is $(25,25)$-robustness. Thus, we use~\Cref{thm:max_rs} to construct $(25,25)$-MERGs with $49$ and $50$ agents. These graphs tolerate up to $F= 24$-total malicious agents. We run consensus simulations on these network graphs with $24$ malicious agents, and the results are shown in Figure~\ref{fig:consensus} (c) and (d). As shown in Figure~\ref{fig:consensus}, both $\gamma$- and $(\gamma,\gamma)$-MERGs allow all normal agents to successfully achieve consensus within the convex hull of their initial values through the W-MSR algorithm with $F$ malicious agents.

\subsection{Consensus with an Edge Removal\label{subsub:second_sim}}

These $\gamma$- and $(\gamma,\gamma)$ graphs achieve their respective maximum robustness levels using the fewest edges possible for a given number of nodes. In other words, removing any edge reduces their robustness levels at least by one. To validate this minimalistic property, we compare the consensus performance of the original $\gamma$- and $(\gamma,\gamma)$-MERGs with that of modified versions in which a single edge has been removed. For each graph, comparison is done using identical initial states for the normal agents and in the presence of $F$ misbehaving agents.

\subsubsection{Simulations with $\gamma$-MERGs} We construct $5$-MERGs with $n=9$ and $n=10$ nodes as shown in Figure~\ref{fig:example1}. Note these graphs achieve $5$-robustness with the fewest edges possible for a given number of nodes, and they tolerate at most $F=2$-local Byzantine agents. We set agents $i \in \{0,1\}$ as Byzantine, and agents $i \in\{2,\dots, n-1\}$ as normal. Each normal agent $i$ is initialized with a state $x_i[0] \in \mathbb Z_{\geq 0}$ sampled uniformly at random from the interval $[0, 49]$ for $i \in \{2, 3, 4\}$ and from $[50, 100]$ for $i \in\{5,\dots, n-1\}$. For all $t\geq 0$, all Byzantine agents adversarially share a value of $100$ with agents $i=0,\dots, \lceil \frac n 2\rceil$ and of $-100$ with other remaining agents.


Figure~\ref{fig:consensus2}~(a)-(b) and (e)-(f) illustrate the consensus results without and with edge removal. We have removed the edge $(3,8)$ for $n=9$ and $(4,9)$ for $n=10$ from the $5$-MERGs shown in Figure~\ref{fig:example1}. In the $5$-MERGs, normal agents successfully reach resilient consensus (shown in Figure~\ref{fig:consensus2}~(a) and~(b)). However, removing the edge from $5$-MERG reduces its robustness to $r = 4$, which is insufficient to guarantee resilient consensus in the presence of $F = 2$ Byzantine agents. Thus, normal agents in the new graphs fail to reach resilient consensus (as shown by the red arrows in Figure~\ref{fig:consensus2}~(e) and~(f)).

\subsubsection{Simulations with $(\gamma,\gamma)$-MERGs}
We use~\Cref{thm:max_rs} to construct $(5,5)$-MERGs with $n=9$ and $n=10$ agents, as shown in Figure~\ref{fig:example2}. These graphs tolerate up to $F=4$-total malicious agents. We set agents $i\in \{0,\dots, 3\}$ as malicious, and agents $i\in\{4,\dots, n-1\}$ as normal. Each normal agent $i$ is initialized with a state $x_i[0] \in \mathbb{R}$ sampled uniformly at random from the interval from $[0, 49]$ for agents $i\in\{4,\dots, 7\}$ and from $[50, 100]$ for the other normal agents.  Malicious agents share adversarial values: agent $3$ transmits $0$ to its neighbors at all times, while the remaining malicious agents transmit $100$ to their neighbors at all times.

Figure~\ref{fig:consensus2}~(c)-(d) and~(g)-(h) illustrate the results without and with edge removal. We have removed the edge $(7,8)$ for $n=9$ and $(5,8)$ for $n=10$ from the $(5,5)$-MERGs visualized in Figure~\ref{fig:example2}. In these $(5,5)$-MERGs, normal agents successfully reach resilient consensus (shown in Figure~\ref{fig:consensus2}~(c) and~(d)). However, removing the edge from each $(5,5)$-MERG reduces its robustness to at most $(5,4)$-robustness, which is insufficient to guarantee resilient consensus in the presence of $F = 4$ malicious agents. As a result, normal agents fail to reach resilient consensus in the modified graphs (as shown by the red arrows in Figure~\ref{fig:consensus2}~(g) and~(h)).

\section{Conclusion}
\label{sec:conc}

This paper investigates and presents the necessary conditions and topological structures of $r$- and $(r,s)$-robust graphs. These results are derived purely from the definitions of robustness, providing a general understanding of what these robustness levels require. We establish the tight lower bounds on the edge counts for graphs with any number of nodes to achieve maximum robustness. Then, we develop further from these results by constructing two classes of graphs of any number of nodes that achieve maximum $r$- and $(r,s)$-robustness with minimal sets of edges. For future work, we aim to extend our results to arbitrary robustness levels.

\section{Appendix}
\label{sec:appendix}

Here, the proof for~\Cref{lem:complete2} is provided below, accompanied by Figure~\ref{fig:complete2} for illustrative support.

\begin{figure*}
    \centering
\includegraphics[width=1\linewidth]{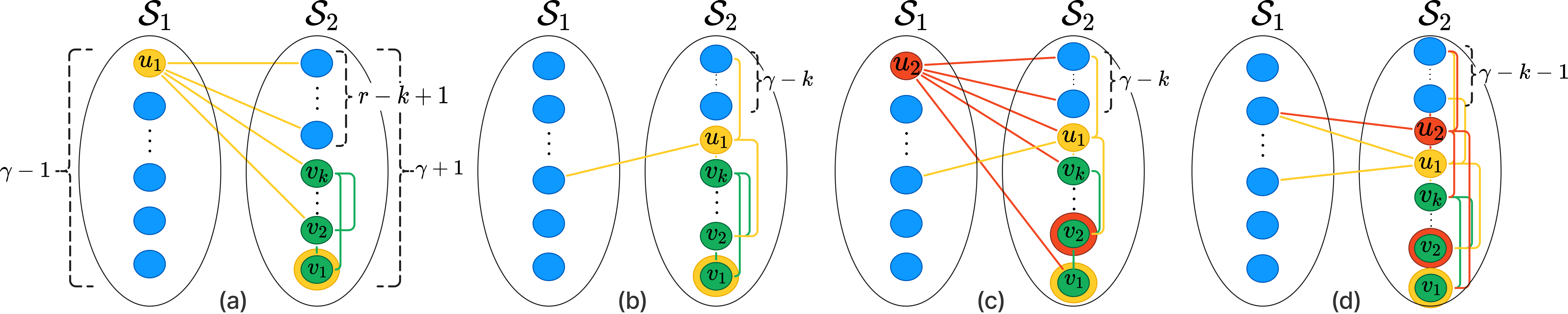}
    \caption{These figures demonstrate the first two iterations of the first part in the proof of~\Cref{lem:complete2}. Throughout figures (a) to (d), we maintain $|\mathcal S_1|=\gamma-1$ and $|\mathcal S_2|=\gamma+1$. Figures (a) and (b) illustrate the first iteration. Here, $\mathcal S_2$ contains $\gamma+1$ nodes including $k$-clique $\mathcal C_{k,1}=\{v_1,\cdots,v_k\}$ (colored in green), while $\mathcal S_1$ contains the remaining $\gamma-1$ nodes. Since $\mathcal S_1$ has to be $\gamma$-reachable, there exists a node $u_1 \in \mathcal S_1$ that has edges with at least $\gamma$ nodes in $\mathcal S_2$. Figure~(a) shows the case scenario where the yellow node $u_1$ does not have an edge with $v_1$ (highlighted in yellow). Then, $u_1$ forms a $k$-clique $\mathcal C_{k,2}$ with $\mathcal  C_{k,1}\setminus\{v_1\}$. Consequently, in any future iteration, if any node $u_i \in \mathcal S_1$, $i\neq 1$, connects to at least $\gamma$ nodes in $\mathcal S_2$ but not to $v_1$, it will form a $(k+1)$-clique with $\mathcal C_{k,2}$. Then, $u_1$ gets swapped with a node in $S_2\setminus \mathcal C_{k,1}$, as shown in Figure~(b). Figures (c) and (d) present the second iteration. Figure~(c) presents the case scenario where $u_2$ only has $\gamma$ edges and does not have an edge with $v_2$ (highlighted in red). As before, if any node $u_i \in \mathcal S_2$, $i\neq 2$, connects to $\gamma$ nodes while missing an edge to $v_2$ in future iteration, it will form a $(k+1)$-clique. Then, $u_2$ gets swapped with a node in $\mathcal S_2\setminus U'$ where $U'= \mathcal C_{k,1}\cup \mathcal C_{k,2}$, as shown on Figure~(d). The process continues, maintaining $\gamma$-reachability of $\mathcal S_1$, until a $\big(\lfloor\frac {\gamma+3}{2}\rfloor\big)$-clique is formed.
    }
    \label{fig:complete2}
\end{figure*}

\begin{proof}
Since $\mathcal G$ is $\gamma$-robust, it holds that for any pair of disjoint, non-empty subsets $\mathcal S_1$, $\mathcal S_2\subset\mathcal V$, at least one of them is $\gamma$-reachable. The proof is divided into two parts. In the first part, we will inductively show our argument by constructing two subsets $\mathcal S_1, \mathcal S_2 \subset \mathcal V$ such that $\mathcal G$ is $\gamma$-robust and contains a $c$-clique where $2 \leq c \leq \lfloor\frac {\gamma+3}{2}\rfloor$. Then, in the second part of the proof, we will show that $\mathcal G$ must contain a $\big(\lfloor\frac {\gamma+4}{2}\rfloor\big)$-clique through some reasoning.

\textbf{First Part:} For the induction step, we fix $|\mathcal S_1|=\gamma-1$ and $|\mathcal S_2|=\gamma+1$, enforcing $\mathcal S_1$ to be $\gamma$-reachable. Hence we have the base case of a $2$-clique. Now, we will iteratively swap nodes between $\mathcal S_1$ and $\mathcal S_2$ to argue some of the edges are necessary for $\mathcal G$ to satisfy Definition~\ref{def:r_robust}.

WLOG, let $\mathcal S_2$ contain $\gamma+1$ nodes including $k$-clique $\mathcal C_{k,1}=\{v_1,\cdots,v_k\}$ (green in Figure~\ref{fig:complete2}), where $k\geq 2$, and let $\mathcal S_1$ contain the remaining $\gamma-1$ nodes. We also denote $U_1 = U_1' = \mathcal C_{k,1}$. Since $\mathcal S_1$ is $\gamma$-reachable, $\exists u_1 \in \mathcal S_1$ (yellow in Figure~\ref{fig:complete2}) such that $|\mathcal N_{u_1}\cap \mathcal S_2|=|\mathcal N_{u_1}\setminus \mathcal S_1|\geq \gamma$. There are two cases: 

\begin{enumerate}
    \item $u_1$ has edges with all $\gamma+1$ nodes in $\mathcal S_2$, or $\gamma$ nodes but not with a node in $\mathcal S_1 \setminus U_1$.
    \item $u_1$ has edges with $\gamma$ nodes but not with $v_1 \in U_1$ (highlighted in yellow in Figure~\ref{fig:complete2}).
\end{enumerate}

If 1), $u_1$ forms a $(k+1)$-clique with $\mathcal C_{k,1}$. If 2), $u_1$ forms another $k$-clique $\mathcal C_{k,2}$ with $U_1' \setminus\{v_1\}$. Note that $u_1$ has edges with all nodes in $\mathcal S_2\setminus\{v_1\}$. Now, if 1), we have a $(k+1)$-clique and done. If 2), let $U_2= \mathcal C_{k,1} \cap \mathcal C_{k,2} = \{v_2,\cdots,v_k\}$ and $U_2'=\mathcal C_{k,1} \cup \mathcal C_{k,2} =\{v_1,\cdots,v_k,u_1\}$. We also swap $u_1$ with any of the nodes in $\mathcal S_2\setminus U_1'$ to get $\mathcal S_1$ and $\mathcal S_2$ shown on Figure~\ref{fig:complete2} (b). 


 Then, since $\mathcal S_1$ has to be $\gamma$-reachable, $\exists u_2 \in \mathcal S_1$ (red in Figure~\ref{fig:complete2}) such that $|\mathcal N_{u_2}\cap \mathcal S_2|\geq \gamma$. Again, there are two cases: 
 \begin{enumerate}
     \item $u_2$ has edges with all $\gamma+1$ nodes in $\mathcal S_2$, or $\gamma$ nodes but not with a node in $\mathcal S_2\setminus U_2$
     \item $u_2$ has edges with $\gamma$ nodes but not with $v_2 \in U_2$ (highlighted in red in Figure~\ref{fig:complete2}).
 \end{enumerate}
 
 If 1), $u_2$ forms a $(k+1)$-clique with $\mathcal C_{k,j}$ for some $j\in \{1,2\}$. If 2), $u_2$ forms another $k$-clique $\mathcal C_{k,3}$ with $U_2'\setminus\{v_1,v_2\}=\{v_3,\cdots,v_k,u_1\}$. Also note that $u_2$ has edges with all nodes in $\mathcal S_2\setminus \{v_2\}$ including $u_1$. Now, if 1), we have a $(k+1)$-clique and done. If 2), we update $U_3=\bigcap\limits_{j=1}^3 \mathcal C_{k,j}=\{v_3,\cdots,v_k\}$ and $U_3'=\bigcup\limits_{j=1}^3 \mathcal C_{k,j}=\{v_1,\cdots,v_k,u_1,u_2\}$. We then swap $u_2$ with any node in $\mathcal S_2\setminus U_2'$, continuing the process.
 
Likewise, unless we get case 1), we keep encountering the second case illustrated in the previous steps. To generalize the procedure, let $u_n \in \mathcal S_1$ such that $|\mathcal N_{u_n}\cap \mathcal S_2|\geq \gamma$, $n\in\{1,\cdots,k\}$. Then, $U_n=\bigcap\limits_{j=1}^{n}\mathcal C_{k,j}=\{v_n,\cdots,v_k\}$ and $U_n'=\bigcup\limits_{j=1}^n\mathcal C_{k,j}=\{v_1,\cdots,v_k,u_1,\cdots,u_{n-1}$\}. There are two cases: 
\begin{enumerate}
    \item $u_n$ has edges with all $\gamma+1$ nodes in $\mathcal S_2$, or $\gamma$ nodes but not with a node in $\mathcal S_2\setminus U_n$.
    \item $u_n$ has edges with $\gamma$ nodes in $\mathcal S_2$ but not with $v_n \in U_n$.
\end{enumerate}

 If 1), $u_n$ forms a $(k+1)$-clique with $\mathcal C_{k,j}$ for any $j\in \{1,\cdots, n\}$. If 2), $u_n\in \mathcal S_1$ forms another $k$-clique $\mathcal C_{k,n+1}$ with $U_n'\setminus \{v_1,\cdots, v_n\}$. Now, if 1), we have a $(k+1)$-clique and done. If 2) we swap $u_n$ with any node in $\mathcal S_2\setminus U_n'$ to continue the procedure. 
    
If we continue encountering the case 2) for $k$ times, since $\mathcal S_1$ is $\gamma$-reachable, $\exists u_{k+1} \in \mathcal S_1$ such that $|\mathcal N_{u_{k+1}}\cap \mathcal S_2|\geq \gamma$. Then, there are two cases: 
\begin{enumerate}
    \item $u_{k+1}$ has edges with all $\gamma+1$ nodes in $\mathcal S_2$.
    \item $u_{k+1}$ has edges with $\gamma$ nodes in $\mathcal S_2$ but not with a node in $\mathcal S_2\setminus U_{k+1}$ where $U_{k+1}=\bigcap\limits_{j=1}^{k+1}\mathcal C_{k,j}=\emptyset$.
\end{enumerate}
    
In both cases, $u_{k+1}$ forms a $(k+1)$-clique with $\mathcal C_{k,j}$ for any $j \in \{1,\cdots, k+1\}$, showing a $(k+1)$-clique must exist. Also note that $U_{k+1}'=\{v_1,\cdots,v_k,u_1,\cdots,u_{k}\} \subseteq S_2$. Since (i) at worst $u_1,\cdots, u_k$ as well as $v_1,\cdots,v_k$ need to be in $\mathcal S_2$ for a $(k+1)$-clique to be formed and (ii) $|\mathcal S_2|=\gamma+1$, $k\leq \lfloor\frac {\gamma+1}{2}\rfloor$. That means $\mathcal G$ contains a $\big(\lfloor\frac {\gamma+3}{2}\rfloor\big)$-clique, which completes the first part of the proof.

\textbf{Second Part:} Continuing from the previous paragraph, at $k= \lfloor\frac {\gamma+1}{2}\rfloor$ and for even values of $\gamma$, at worst the case 2) mentioned above is repeated $\frac \gamma 2$ times. Then, $\mathcal S_2$ contains $\mathcal K=\{u_1,\cdots, u_{\frac \gamma 2}\}$, $\mathcal C_{k,1}=\{v_1,\cdots, v_{\frac \gamma 2}$\}, and a node $p$ not in $\mathcal K$ and $\mathcal C_{k,1}$. Note both $\mathcal K$ and $\mathcal C_{k,1}$ disjointly form a $\big(\frac {\gamma} 2\big)$-clique. We know that a node $u_m \in \mathcal S_1$, $m=\frac {\gamma+2} 2$, must form a $m$-clique $\mathcal Q$ with either $\mathcal K$ or $\mathcal C_{k,1}$. We swap $u_m \in \mathcal S_1$ with $p \in \mathcal S_2$. Now let $\mathcal P=\mathcal S_2\setminus \mathcal Q$. Note that nodes in $\mathcal P$ form a $(\frac \gamma 2)$-clique such that $\mathcal P\cap \mathcal Q=\emptyset$. Since $\mathcal S_1$ is $\gamma$-reachable even after $u_m$ is swapped out, $\exists u_{m+1} \in \mathcal S_1$ such that $|\mathcal N_{u_{m+1}}\cap \mathcal S_2|\geq \gamma$. There are two cases: 
\begin{enumerate}
    \item $u_{m+1}$ has edges with all $\gamma+1$ nodes in $\mathcal S_2$, or $\gamma$ nodes but not with a node in $\mathcal P$.
    \item $u_{m+1}$ has edges with $\gamma$ nodes in $\mathcal S_2$ but not with a node in $\mathcal Q$.
\end{enumerate}

If 1), $\mathcal Q\cup \{u_{m+1}\}$ forms a $\big(\frac {\gamma+4} 2\big)$-clique. If 2), $\mathcal P\cup\{u_{m+1}\}$ forms a $\big(\frac {\gamma+2} 2\big)$-clique $\mathcal P'$ such that $\mathcal P'\cap \mathcal Q=\emptyset$. In this case, WLOG, let $\mathcal P' \subset \mathcal S_1$ and $\mathcal Q \subset \mathcal S_2$ such that $|\mathcal S_1|=|\mathcal S_2|=\gamma$. Since either $\mathcal S_1$ or $\mathcal S_2$ is $\gamma$-reachable, all of the nodes in either $\mathcal P'$ or $\mathcal Q$ have edges with one additional node in $\mathcal S_2$ or $\mathcal S_1$ respectively, forming a $\big(\frac {\gamma+4} 2\big)$-clique. Thus, $\mathcal G$ must contain a $\big(\frac {\gamma+4} 2\big)$-clique for even $\gamma$. Since $\mathcal G$ must have a clique size of $\frac {\gamma+3} 2$ and $\frac {\gamma+4} 2$ for odd and even $\gamma$ respectively, $\mathcal G$ must have a $\big(\lfloor{\frac {\gamma+4} 2}\rfloor\big)$-clique.
\end{proof}

\section*{References}
\bibliographystyle{IEEEtran}
\bibliography{references_ll}

\begin{IEEEbiography}[{\includegraphics[width=1in,height=1.25in,clip,keepaspectratio]{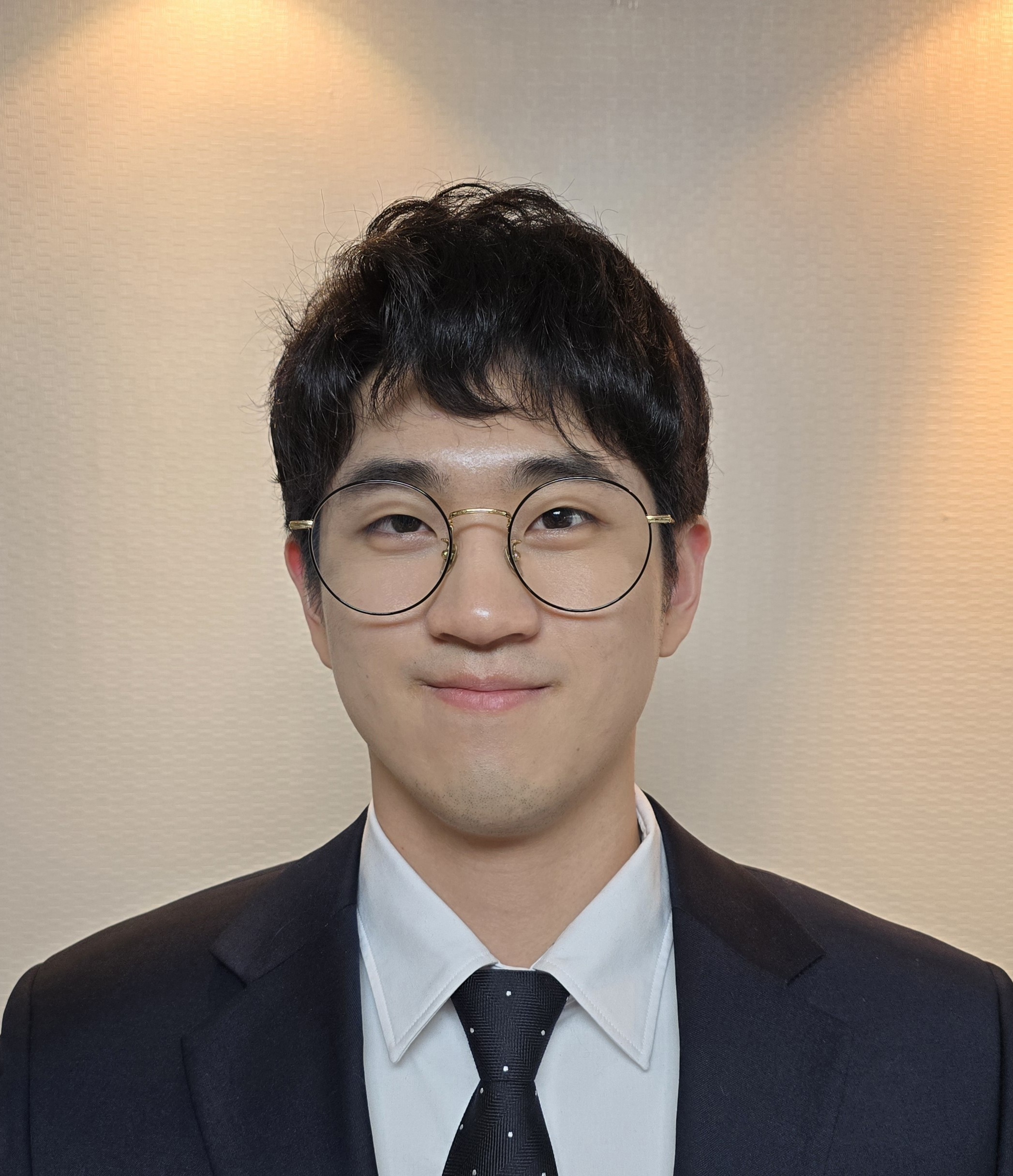}}]{Haejoon Lee} (Student Member, IEEE) received the B.S. degree in applied math and statistics from Stony Brook University, Stony Brook, NY, USA, in 2023. He earned the M.S.
degree in robotics in 2025 from the University of
Michigan, Ann Arbor, MI, USA, where he is currently working toward the Ph.D. degree in robotics, advised by Prof. Dimitra Panagou. 

His research interests include safety, resilience, and security of autonomous systems, with particular emphasis on distributed consensus, optimization, and learning for multi-agent systems in adversarial environments. 

\end{IEEEbiography}

\begin{IEEEbiography}[{\includegraphics[width=1in,height=1.25in,trim={0mm 0 0 0},clip,keepaspectratio]{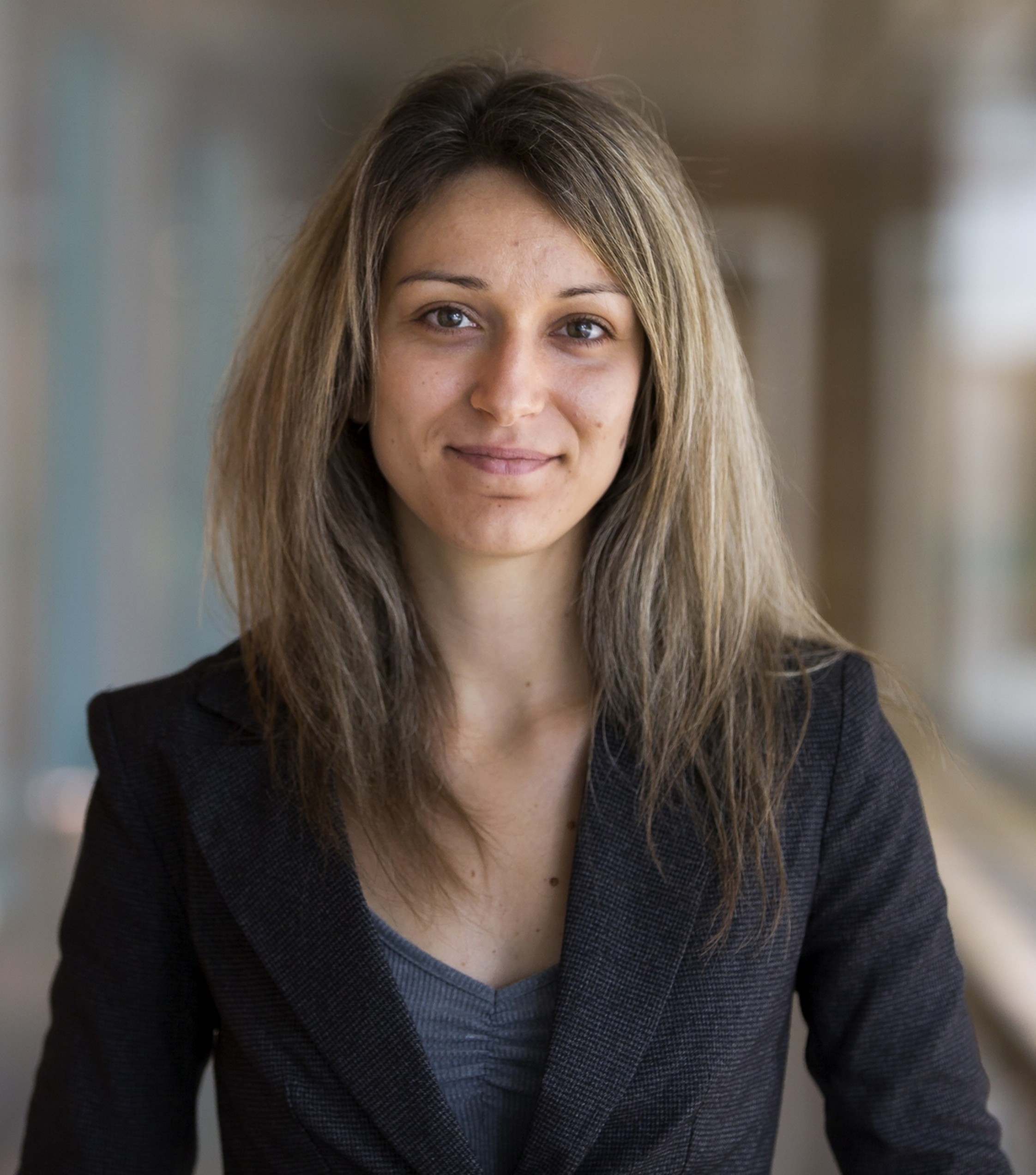}}]{Dimitra Panagou} (Diploma (2006) and PhD (2012) in Mechanical Engineering from the National Technical University of Athens, Greece) is an Associate Professor with the Department of Robotics, with a courtesy appointment with the Department of Aerospace Engineering, University of Michigan. Her research program spans the areas of nonlinear systems and control; multi-agent systems; autonomy; and  aerospace robotics. She is particularly interested in the development of provably-correct methods for the safe and secure (resilient) operation of autonomous systems with applications in robot/sensor networks and multi-vehicle systems under uncertainty. She is a recipient of the NASA Early Career Faculty Award, the AFOSR Young Investigator Award, the NSF CAREER Award, the George J. Huebner, Jr. Research Excellence Award, and a Senior Member of the IEEE and the AIAA.
\end{IEEEbiography}

\end{document}